\newtheorem{lemma}{Lemma}
\newtheorem{corollary}{Corollary}
\newtheorem{theorem}{Theorem}
\newtheorem{definition}{Definition}
\title{Cadences in Grammar-Compressed Strings}
\author{Julian Pape-Lange\thanks{Technische Universit\"at Chemnitz, Stra\ss e der Nationen 62, 09111 Chemnitz, Germany. Email: julian.pape-lange@informatik.tu-chemnitz.de}}
\date{}
\begin{document}

\maketitle

\begin{abstract}
Cadences are structurally maximal arithmetic progressions of indices corresponding to equal characters in an underlying string.

This paper provides a polynomial time detection algorithm for $3$-cadences in grammar-compressed binary strings. This algorithm also translates to a linear time detection algorithm for $3$-cadences in uncompressed binary strings.

Furthermore, this paper proves that several variants of the cadence detection problem are $\mathcal{NP}$-complete for grammar-compressed strings. As a consequence, the equidistant subsequence matching problem with patterns of length three is $\mathcal{NP}$-complete for grammar-compressed ternary strings.
\end{abstract}

\section{Introduction}

A sub-cadence in a string is an arithmetic progression of indices corresponding to equal characters. This concept in the context of finite sequences is quite old and dates back to van der Waerden. He showed in the year 1927 in \cite{vanDerWaerden} that for each $k$ and each alphabet size $|\Sigma|$, there is a natural number $m=m(k,|\Sigma|)$, such that each sequence of characters in $\Sigma$ with length greater than or equal to $m$ has a sub-cadence consisting of $k$ indices.

The term cadence in the context of strings was first used by Gardelle in \cite{Cadences_Gardelle} in the year 1964.

In this paper, I use the notation of Amir et al.\ in \cite{Cadences_Amir} and say that a cadence is a sub-cadence which is structurally maximal in the sense that the extension of the arithmetic progression to the left or to the right would not result in a valid index of the string.

For example, in the string $S=10101$, the three indices $(1,3,5)$ form a cadence, since the indices $-1$ and $7$ are both outside of the string. On the other hand, in the string $S=01110$, the three indices $(2,3,4)$ do not form a cadence, since, for example, the index $1$ is inside the string.

Cadences recently gained some traction. At the beginning of this year it was proven by Funakoshi and Pape-Lange in \cite{FPL_3-Cadences} that the number of $3$-cadences can be counted in $\mathcal{O}(n (\log n)^2)$ time using fast Fourier transform if the underlying alphabet has a constant size. 

More recently, Funakoshi et al.\ presented the more general problem of equidistant subsequence matching in \cite{Funakoshi_ESM} which extends the cadences to arbitrary arithmetic factors, and showed that techniques for cadence-detection can be adopted to solve equidistant subsequence matching with similar time complexity.

Strings can be compressed by straight-line programs, which are context-free grammars whose languages contain exactly one string each. Since this grammar-based compression is able to compress some strings to logarithmic size, we are interested which polynomial time problems on uncompressed strings can also be solved in polynomial time with respect to the compressed size of the string. For example, grammar-based compression allows for fast algorithms as the fully compressed pattern matching by Je\.z presented in \cite{Jez}. Also, the size of the smallest grammar is comparable to other strong string compression algorithms as LZ77 (as proven simultaneously by Rytter in \cite{Rytter_smallest_grammar} and by Charikar et al.\ in \cite{Charikar_smallest_grammar}) and thereby as the run-length encoded Burrows-Wheeler transform (as recently proven independently by Kempa and Kociumaka in \cite{Kempa_BWT} and by Pape-Lange in \cite{Pape-Lange_BWT}).

In this paper, I will prove that it can be decided in polynomial time whether a binary grammar-compressed string contains a $3$-cadence. Furthermore, this polynomial time $3$-cadence detection algorithm for binary grammar-compressed strings also translates to a linear time $3$-cadence detection algorithm for uncompressed binary strings.

In order to obtain these algorithms, this paper introduces two new special cases of the sub-cadence, the $L$-$R$-cadence, which starts and ends in given intervals, and the even/odd $3$-sub-cadence, which start at even/odd indices.

I will also prove that for grammar-compressed strings, the cadence detection problem becomes $\mathcal{NP}$-complete for longer cadences or $3$-cadences over a ternary alphabet.

\section{Preliminaries}

A string $S$ of length $n$ is the concatenation $S=S[1]S[2]S[3]\dots S[n]$ of characters from an alphabet $\Sigma$. Strings naturally split into runs of equal characters. For example, the string $00010101100$ splits into $000\cdot1\cdot0\cdot1\cdot0\cdot11\cdot00$. In this paper, these runs of equal characters are just called runs for the sake of simplicity.

For the sub-cadences and cadences, this paper uses the definitions of Amir et al.\ in \cite{Cadences_Amir}. These definitions are slightly different from the definition by Gardelle in \cite{Cadences_Gardelle} and by Lothaire in \cite{Cadences_Lothaire}. Funakoshi and Pape-Lange present a comparison of these definitions in \cite{FPL_3-Cadences}.

\begin{definition}
    A \emph{$k$-sub-cadence} is an arithmetic progression of indices given by the triple $(i,d,k)$ of integers such that $d>0$ and \[S[i]=S[i+d]=\dots=S[i+(k-1)d]\] hold.
\end{definition}

As a special case, cadences additionally have to be \emph{structurally maximal} in the sense that every extension of the underlying arithmetic progression is not contained in the integer interval $\{1,2,3,\dots, n\}$ anymore. More formally:

\begin{definition}
    A \emph{$k$-cadence} is a $k$-sub-cadence $(i,d,k)$ such that the inequalities $i-d\leq 0$ and $n<i+kd$ hold.
\end{definition}

In this paper, we will also consider a new special case of the sub-cadence, in which the first element and the last element of the sub-cadence are contained in given intervals:

\begin{definition}
    For two disjoint intervals $L$ and $R$, a \emph{$L$-$R$-$k$-cadence} is a $k$-sub-cadence which starts in the interval $L$ and ends in the interval $R$. I.e.\ $i\in L$ and $i+(k-1)d\in R$ hold.
\end{definition}

For the compressed problems, we consider straight-line grammars in Chomsky normal form. I.e.\ a string is given by a grammar $G=(V, \Sigma, S, rhs)$ such that the set $V=\{v_1, v_2, \dots, v_{|V|}\}$ of nonterminals is ordered so that for every nonterminal $v_i\in V$ the right-hand side is either a character from the alphabet or of the form $v_j v_k$ with $j,k< i$. We also assume that the start symbol $S$ is given by $v_{|V|}$.

Since each string which has at least the size of van der Waerden's bound $m=m(k,|\Sigma|)$ has a $k$-sub-cadence, we can detect a $k$-sub-cadences by restricting the string to the first $m$ characters. Since this constant $m$ is only dependent on $k$ and $\Sigma$ but not dependent on the length of the original string, the detection algorithm for $k$-sub-cadences only uses constant time on uncompressed strings.

%
%

\section{NP-Complete Cadence Problems}

In this section, I will prove the following theorem:

\begin{theorem} \label{thm:Cad-NP}
    The decision problem of $k$-cadence detection on grammar-compressed strings is $\mathcal{NP}$-complete if at least one of the following conditions holds:
    \begin{itemize}
        \item $k\geq3$ and $|\Sigma|\geq 2$ and we only consider $k$-cadences with a given character,
        \item $k\geq3$ and $|\Sigma|\geq 3$ or
        \item $k\geq4$ and $|\Sigma|\geq 2$.
    \end{itemize}
\end{theorem}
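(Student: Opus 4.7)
For membership in $\mathcal{NP}$ in all three regimes, a witness is the pair $(i,d)$ specifying the candidate $k$-cadence. Because the decompressed length $n$ is bounded by $2^{|G|}$, the integers $i$ and $d$ have binary descriptions polynomial in the grammar size $|G|$, and verification needs only $k$ random accesses $S[i+jd]$ together with the structural inequalities $i-d\leq 0$ and $n<i+kd$; both tasks are polynomial on a grammar-compressed string by the standard random-access technique for straight-line grammars.

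For the $\mathcal{NP}$-hardness I would treat the first case ($k=3$, $|\Sigma|=2$, with a fixed target character) as the central reduction and derive the other two by padding. My plan for the central case is to reduce from \textsc{Subset Sum}. Given an instance $(a_1,\dots,a_n,T)$, I would construct in polynomial time a straight-line grammar producing a binary string whose target character occurs both in a ``bulk'' of $2^n$ highly structured positions and at two ``anchor'' positions inserted outside this bulk, designed so that a 3-cadence through both anchors exists iff some subset of the $a_i$ sums to $T$. The bulk would be produced by a grammar recursion of the form $A_i\to A_{i-1}\,X_i\,A_{i-1}$ (rewritten into Chomsky normal form via auxiliary nonterminals, with each $X_i$ a compressible spacer whose length encodes the instance number $a_i$), so that the target-character positions in the decompression of $A_n$ are precisely the $2^n$ linear combinations $\sum_{i\in S} c_i$ for $S\subseteq\{1,\dots,n\}$, with coefficients $c_i$ derived from the $a_i$. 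An outer concatenation then appends two anchor occurrences of the target character at specific offsets so that their midpoint lies on a bulk position exactly when a solution to the \textsc{Subset Sum} instance exists.

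The hard part will be suppressing the spurious 3-arithmetic progressions that live naturally among the $2^n$ structured bulk positions. My plan to kill them is to combine each $a_i$ with a large multiplicative tag drawn from a 3-AP-free set, for instance by setting $c_i = M\cdot 3^{i-1}+a_i$ with $M\gg\max_i a_i$. Because the subset sums of distinct powers of $3$ (a Behrend-type set, namely the integers whose base-$3$ expansion uses only the digits $0$ and $1$) contain no non-trivial 3-term arithmetic progression, and since the large scale $M$ forces any putative bulk 3-AP to match separately in its tag and value components, every spurious 3-AP among bulk positions collapses to a trivial one. The delicate step, and the main conceptual obstacle, is to position the two anchors so that the intended 3-cadence survives this tagging instead of being killed together with the spurious ones: the midpoint of the anchors must be a bulk position precisely when \emph{some} subset of the $a_i$ sums to $T$, for \emph{any} tag. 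I expect to resolve this through a careful outer wrapping (possibly with auxiliary grammar layers or several parallel copies of the basic construction) that decouples the tag from the value at the anchor positions, allowing the midpoint to existentially quantify over subsets rather than committing to a single one.

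Once the central case is established, the second bullet ($k=3$, $|\Sigma|\geq 3$) follows by letting the target character be a fresh third letter that occurs only at the planted positions, which removes any interference between the anchors and the bulk. The third bullet ($k\geq 4$, $|\Sigma|=2$) follows by padding the construction with $k-3$ additional target-character occurrences aligned on the intended arithmetic progression, so that extending the planted 3-cadence to a $k$-cadence is possible precisely when the underlying \textsc{Subset Sum} instance is a YES-instance.
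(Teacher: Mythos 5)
Your $\mathcal{NP}$-membership argument is fine and essentially matches the paper's. The hardness part, however, has a genuine gap at its core. A $3$-cadence through two \emph{fixed} anchor positions $p$ and $q$ has the uniquely determined midpoint $\frac{p+q}{2}$, so ``a $3$-cadence through both anchors exists'' reduces to checking whether one explicitly computable position carries the target character --- a polynomial-time random access on the grammar, not an $\mathcal{NP}$-hard question. To encode \textsc{Subset Sum}, the existential quantifier over subsets must be absorbed into the choice of the common difference $d$, i.e.\ at least one of the three elements must range over the exponentially many bulk positions; your text oscillates between these two incompatible pictures (fixed anchors, yet a midpoint that ``existentially quantifies over subsets''), and the step you yourself flag as the main conceptual obstacle --- decoupling the Behrend-type tag from the value component at the anchors --- is exactly where the construction is missing. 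As written there is no argument that a cadence exists precisely when \emph{some} subset sums to $T$: distinct subsets with the same sum sit at distinct tagged positions, and no single arithmetic progression can ``see'' all of them at once. The paper sidesteps this entire difficulty by reducing not from \textsc{Subset Sum} but from Lohrey's already-$\mathcal{NP}$-complete compressed problem ``is there an index $l$ with $P[l]=P'[l]=1$?'': it concatenates $k$ equal-length brackets containing $P$, a single $1$, the reversed padded $P'$, and (for $k>3$) all-ones blocks, so that the structural-maximality inequalities force the $(j+1)$-th element of any cadence into the $(j+1)$-th bracket and the existential over $l$ becomes the existential over $d$ for free.

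Two smaller omissions: for the second bullet you must still exclude $k$-cadences over the two \emph{old} characters once the fresh letter is planted (the paper fills the middle bracket with $2$s so that no $0$-cadence survives, and the outer brackets contain no $2$, so no $2$-cadence survives); and for the third bullet you must exclude $k$-cadences with character $0$ in a binary string (the paper uses an all-ones bracket that every $k$-cadence must intersect). Neither issue is addressed in your sketch.
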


Since we can test for a given candidate $(i,d,k)$ of a $k$-cadence in polynomial time, whether $(i,d,k)$ forms indeed a $k$-cadence, all three problems mentioned above belong to $\mathcal{NP}$.

To show the $\mathcal{NP}$-hardness, I will reduce the following problem, which Lohrey proves in Theorem 3.13 of \cite{LohreyBook} to be $\mathcal{NP}$-complete, to the problems above: \\
\begin{tabular}{ll}
    \textbf{input:} & Two strings $P$ and $P'$ over the alphabet $\{0,1\}$ given by grammar-compression. \\
    \textbf{output:} & Is there an index $l$ with $P[l]=P'[l]=1$?
\end{tabular} 

Let $P$ and $P'$ be strings over the alphabet $\{0,1\}$ given by grammar-compression. Without loss of generality $|P'|\leq |P|$ holds. Since it is more convenient if both strings have the same length, we pad the shorter string $P'$ with zeros. Also, for the cadences, it will be helpful, if one of the strings is reversed. We therefore define $P''=(P' 0^{|P|-|P'|})_{\operatorname{rev}}$.

In this setting, for every index $l$, the equation $P[l]=P'[l]=1$ holds if and only if the equation $P[l]=P''[|P|+1-l]=1$ holds as well.

Consider the string
\[S = 
\Big(0^{(k-1)|P|} \cdot P \cdot 0 \cdot 0^{k|P|}\Big) 
\Big(0^{k|P|} \cdot 1 \cdot 0^{k|P|}\Big)
\Big(0^{k|P|} \cdot 0 \cdot P'' \cdot 0^{(k-1)|P|}\Big) 
\Big(1^{2k|P|+1}\Big)^{k-3}
\textup{.}\]

The grammar of $S$ can be built by the grammars of $P$ and $P'$ and $\mathcal{O}\left(\log k^2|P|\right)$ additional nonterminals. Since the grammar-compression of a string $P$ needs at least $\Omega (\log |P|)$ nonterminals, the compressed size of $S$ is, for fixed $k$, polynomial in the compressed size of the inputs.

If there is an index $l$ with $P[l]=P''[|P|+1-l]=1$, we can construct a corresponding $k$-cadence in $S$ with character $1$:

The corresponding indices $(k-1)|P|+l$ and $2(2k|P|+1)+k|P|+1+(|P|+1-l)$ to $P[l]$ and $P''[|P|+1-l]$ in $S$ as well as the $1$ in the second bracket at index $(2k|P|+1)+k|P|+1$ form an arithmetic progression starting at $i=(k-1)|P|+l$ with distance $d=2k|P|+1+(|P|+1-l)$.

The index $l$ is bounded by $1\leq l \leq |P|$, and each bracket has length $2k|P|+1$. Therefore, $n=k(2k|P|+1)$ holds and the inequalities 
\[i+kd = ((k-1)|P|+l) + k(2k|P|+1+(|P|+1-l)) > k(2k|P|+1) = n\]
and
\begin{align*}
i+(k-1)d &= ((k-1)|P|+l) + (k-1)(2k|P|+1+(|P|+1-l)) \\
&\leq (k|P|) + (k-1)(2k|P|+1) + (k-1)|P| < k(2k|P|+1) = n
\end{align*}
as well as $i-d\leq 0$ and $i>0$ hold as well.

Furthermore, for $0\leq j< k$ the index $i+jd$ lies in the $(j+1)$-th bracket. Therefore, $S[i+jd]=1$ holds for each $0\leq j< k$. This implies that $(i,d,k)$ is a $k$-cadence with character~$1$.

If, on the other hand, the triple $(i,d,k)$ defines a $k$-cadence with character $1$ in $S$, we can find a corresponding index $l$ with $P[l]=P''[|P|+1-l]=1$:

The inequalities $i-d\leq 0 < i$ and $i+(k-1)d\leq n < i+kd$ of the cadence imply 
\[\frac{j}{k}n < \frac{k-j}{k}i+\frac{j}{k}(i+kd) = i+jd = \frac{k-j-1}{k}(i-d)+\frac{j+1}{k}(i+(k-1)d) \leq \frac{j+1}{k}n\textup{.}\]

Since the brackets in the definition of $S$ divide the string in $k$ substrings with equal length, the $(j+1)$-th element of any cadence lies in the $(j+1)$-th of the $k$ brackets. Therefore, each cadence with character $1$ contains the single $1$ at index $(2k|P|+1)+(k|P|)+1$ in the second bracket. Furthermore, the first element of the arithmetic progression has to be a $1$ in $P$ in the first bracket and the third element of the arithmetic progression has to be a $1$ in $P''$ in the third bracket. 

By construction, the two indices of these characters have the same distance to the index $(2k|P|+1)+(k|P|)+1$, and the two strings $P$ and $P''$ have the same distance to the index $(2k|P|+1)+(k|P|)+1$ as well. Therefore, the first element of the $k$-cadence and the third element of the $k$-cadence define an index $l$ with $P[l]=P''[|P|+1-l]=1$.

Therefore, the string $S$ has a $k$-cadence with character $1$ if and only if there is an index $l$ such that $P[l]=P'[l]=1$ holds.

If $k>3$ holds, there is at least one bracket in $S$ containing only the character $1$. In this case, this bracket forces every $k$-cadence to be a $k$-cadence with character $1$. Therefore, in this case, the requirement that the underlying character has to be $1$ can be dropped.

For $3$-cadences on a ternary alphabet we consider the string 
\[S = 
\Big(0^{(k-1)|P|} \cdot P \cdot 0 \cdot 0^{k|P|}\Big) 
\Big(2^{k|P|} \cdot 1 \cdot 2^{k|P|}\Big)
\Big(0^{k|P|} \cdot 0 \cdot P'' \cdot 0^{(k-1)|P|}\Big) 
\textup{.}\]

Since the first and the last bracket do not contain the character $2$, there are no $k$-cadences with character $2$. Since the second bracket does not contain the character $0$, there are no $k$-cadences with character $0$ either. Therefore, all $k$-cadences use the character $1$, and there is a $3$-cadence in $S$ if and only if there is an index $l$ with $P[l]=P'[l]=1$.

This concludes the proof of Theorem \ref{thm:Cad-NP}.

\section{L-R-Cadences} \label{sec:LR}

In this section, I will show that the problems discussed in the last section are also $\mathcal{NP}$-complete for $L$-$R$-cadences instead of cadences. Even if $k=3$ and $|\Sigma|=2$ hold, the compressed detection problem of $L$-$R$-$k$-cadences is $\mathcal{NP}$-complete. However, in this special case, there is a polynomial time detection algorithm if the length of $L$ is similar to the length of $R$. The underlying idea for this algorithm also leads to a linear time algorithm for the detection of $L$-$R$-$3$-cadences in uncompressed binary strings.

In uncompressed strings, the first proposed detection algorithm for $3$-cadences by Amir et al.\ in \cite{Cadences_Amir} was actually a detection algorithm for $L$-$R$-$3$-cadences with $L=\left\{1,2,\dots,\left\lfloor\frac{1}{3}n\right\rfloor\right\}$ and $R=\left\{\left\lfloor\frac{2}{3}n\right\rfloor+1, \left\lfloor\frac{2}{3}n\right\rfloor+2,\dots, n\right\}$. Furthermore, the algorithm of Funakoshi and Pape-Lange in \cite{FPL_3-Cadences} count the number of $3$-cadences in $\mathcal{O}\left(n(\log n)^2\right)$ time, by counting $L$-$R$-$3$-cadences in $\mathcal{O}\left((|L|+|R|)(\log (|L|+|R|))\right)$ time. It therefore seems reasonable to understand the $L$-$R$-cadences to be a simplification of cadences.

However, for all cadence problems discussed in the last section, the corresponding $L$-$R$-cadence problem is $\mathcal{NP}$-complete too:

\begin{lemma} \label{lem:LRCad-NP}
    The decision problem of $L$-$R$-$k$-cadence detection is $\mathcal{NP}$-complete on grammar-compressed strings if at least one of the following conditions holds:
    \begin{itemize}
        \item $k\geq3$ and $|\Sigma|\geq 2$ and we only consider $L$-$R$-$k$-cadences with a given character,
        \item $k\geq3$ and $|\Sigma|\geq 3$ or
        \item $k\geq4$ and $|\Sigma|\geq 2$.
    \end{itemize}
\end{lemma}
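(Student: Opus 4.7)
My plan is to reuse the reduction from the proof of Theorem~\ref{thm:Cad-NP} almost verbatim, replacing its structural-maximality step by a convex-combination inequality that holds automatically once $L$ and $R$ are chosen to be the outermost brackets of $S$.

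For each of the three cases I would let $S$ be the string built in the proof of Theorem~\ref{thm:Cad-NP} for that case; its $k$ brackets all have the common length $M = 2k|P|+1$. I then take $L = [1,M]$ (the first bracket) and $R = [(k-1)M+1,kM]$ (the last bracket); these are disjoint for $k\geq 2$. Membership in $\mathcal{NP}$ is immediate by guessing $(i,d)$ and verifying in polynomial time on the compressed string. For the forward direction of the reduction, the cadence that Theorem~\ref{thm:Cad-NP} exhibits already places its first element inside the first bracket and its last element inside the last bracket, so it is automatically an $L$-$R$-$k$-cadence.

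For the backward direction, I would substitute the following direct argument for the structural-maximality reasoning of Theorem~\ref{thm:Cad-NP}: every $L$-$R$-$k$-sub-cadence $(i,d,k)$ of $S$ satisfies
\[
    i+jd \;=\; \frac{k-1-j}{k-1}\,i \;+\; \frac{j}{k-1}\,\bigl(i+(k-1)d\bigr),
\]
which is a convex combination of a value in $[1,M]$ and a value in $[(k-1)M+1,kM]$, so $i+jd \in [jM+1,(j+1)M]$, precisely the $(j+1)$-th bracket. This is exactly the bracket-placement property used in the proof of Theorem~\ref{thm:Cad-NP}, and from it the remainder of that proof transfers unchanged: in case~1 the character is stipulated; in case~2 the middle bracket contains no $0$ and the outer brackets contain no $2$, so the cadence character must be $1$; in case~3 the last bracket is all-$1$'s and $i+(k-1)d\in R$ pins the cadence character to $1$. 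With the character fixed, the single $1$ in bracket~$2$ locks $i+d$, forcing $i$ and $i+(k-1)d$ to matching positions of $P$ and $P''$, and yielding a witness $l$ with $P[l]=P''[|P|+1-l]=1$, equivalent to $P[l]=P'[l]=1$. For $|\Sigma|\geq 3$ with $k\geq 4$ I would simply reuse the binary construction of case~3.

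The conceptual core is the convex-combination inequality above, which is routine. The subtle point I anticipate is the choice of $L$ and $R$: they must be exactly the outer brackets (rather than, e.g., the narrower sub-intervals actually occupied by $P$ and $P''$), because only with this choice do both endpoints of the convex combination range over the full bracket, so that the intermediate $i+jd$ range fits exactly inside one bracket and no spurious $L$-$R$-sub-cadences appear.
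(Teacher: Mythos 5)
Your proposal is correct and matches the paper's proof, which likewise reuses the strings from Theorem~\ref{thm:Cad-NP} with $L$ and $R$ taken to be the first and last brackets and observes that on these strings $k$-cadences and $L$-$R$-$k$-cadences coincide; your convex combination $i+jd=\frac{k-1-j}{k-1}i+\frac{j}{k-1}(i+(k-1)d)$ is exactly the right substitute for the structural-maximality inequalities and makes explicit the bracket-placement step the paper leaves implicit. (Your closing remark that $L$ and $R$ \emph{must} be the full outer brackets is not quite right---narrower intervals around $P$ and $P''$ would only shrink the range of $i+jd$ and still work---but this does not affect the validity of the argument.)
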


The proofs are essentially equal to the corresponding proofs in the last section, since for $L=\left\{1,2,\dots,\frac{1}{k}n\right\}$ and $R=\left\{\frac{k-1}{k}n+1,\frac{k-1}{k}n+2,\dots,n\right\}$,  all $k$-cadences in the discussed string $S$ are $L$-$R$-$k$-cadences and vice versa.

Next, I will show that in the case $k=3$ and $|\Sigma|=2$, even if we do not require a given character, the decision problem of $L$-$R$-$k$-cadence detection is $\mathcal{NP}$-complete on grammar-compressed strings:

Since we can test for every triple $(i,d,k)$, whether this triple forms an $L$-$R$-$k$-cadence, this problem belongs to $\mathcal{NP}$.

To show the $\mathcal{NP}$-hardness, we will, like in the last section, reduce the following $\mathcal{NP}$-complete problem to the decision problem of $L$-$R$-$3$-cadence detection in grammar-compressed strings: \\
\begin{tabular}{ll}
    \textbf{input:} & Two strings $P$ and $P'$ over the alphabet $\{0,1\}$ given by grammar-compression. \\
    \textbf{output:} & Is there an index $l$ with $P[l]=P'[l]=1$?
\end{tabular} 

Let $P$ and $P'$ be strings over the alphabet $\{0,1\}$ given by grammar-compression. Without loss of generality $|P'|\leq |P|$ holds. Since it is more convenient if both strings have the same length, we pad the shorter string $P'$ with zeros. Also, for the $L$-$R$-$k$-cadences, it will be helpful, if in one of the strings, each character is duplicated. For example, for $P'=011$, we define $P''=001111$. This can be done by introducing two additional nonterminals.

Define $S=1(0^{|P|})(P)(P'')$, $L=\{1\}$ and $R=\{1+2|P|+1, 1+2|P|+2, \dots 1+2|P|+2|P|\}$. In this setting $S[L]=1$ and $S[R]=P''$ holds. Furthermore, for each index $1\leq l \leq |P|$, the equations $P[l]=S[1+(|P|+l)]$ and $P'[l]=P''[2l]=S[1+2|P|+2l]=S[1+2(|P|+l)]$ hold.

Therefore, for each index $l$, the equation $P[l]=1=P'[l]$ holds if and only if the equation $S[1]=S[1+(|P|+l)]=S[1+2(|P|+l)]$ holds. This equation, however, defines an $L$-$R$-$3$-cadence.

This proves, that $S$ has an $L$-$R$-$3$-cadence if and only if there is an index $l$ such that $P[l]=P'[l]=1$ holds.

Together with the previous lemma, this implies:

\begin{theorem} \label{thm:LRCad-NP}
    For $k\geq3$ and $|\Sigma|\geq 2$, the decision problem of $L$-$R$-$k$-cadence detection is $\mathcal{NP}$-complete on grammar-compressed strings.
\end{theorem}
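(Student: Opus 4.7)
My plan is to establish the theorem by combining Lemma~\ref{lem:LRCad-NP} with a fresh reduction for the one case not yet handled, namely $k=3$ and $|\Sigma|=2$ with no restriction on the underlying character. Membership in $\mathcal{NP}$ is immediate in every case: a candidate triple $(i,d,k)$ can be verified by issuing $k$ grammar queries for the characters $S[i+jd]$ and testing $i\in L$ and $i+(k-1)d\in R$, all in polynomial time in the grammar size. For hardness, Lemma~\ref{lem:LRCad-NP} already covers $|\Sigma|\geq 3$ as well as the binary case either with $k\geq 4$ or with a prescribed character, so only the $k=3$, $|\Sigma|=2$ case without a prescribed character genuinely needs attention.

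For this remaining case I would reduce once more from the $\mathcal{NP}$-complete problem of deciding, for two grammar-compressed binary strings $P$ and $P'$ padded to the same length, whether some $l$ satisfies $P[l]=P'[l]=1$. The key idea is to eliminate the freedom that a cadence usually enjoys by pinning the first term of the progression. I would construct the binary string $S:=1\cdot 0^{|P|}\cdot P\cdot P''$, where $P''$ is obtained from $P'$ by duplicating every character; $P''$ can be generated from the grammar of $P'$ with only a couple of extra nonterminals. Setting $L:=\{1\}$ and letting $R$ be the block of indices occupied by $P''$, any $L$-$R$-$3$-cadence must be of the form $(1,d,3)$ with $1+2d\in R$, which forces $d=|P|+l$ for some $1\leq l\leq |P|$. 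Then $S[1+d]=P[l]$ and, thanks to the doubling, $S[1+2d]=P''[2l]=P'[l]$; so the cadence condition $S[1]=S[1+d]=S[1+2d]$ reduces exactly to $P[l]=P'[l]=1$.

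The step I expect to be the main obstacle is arranging the arithmetic so that the middle index of the progression is forced into the $P$-block while the last index is forced onto the $P'[l]$-position of $P''$. The zero block of length $|P|$ between the anchor and $P$ exists precisely to prevent the middle index from landing before $P$, and the factor-two dilation in $P''$ compensates for the fact that, with $L=\{1\}$, the third index of the progression sits at distance $2d$ while the second sits at distance $d$. Once these alignments are verified, the theorem follows by invoking Lemma~\ref{lem:LRCad-NP} on the other parameter ranges.
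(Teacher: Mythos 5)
Your proposal is correct and follows essentially the same route as the paper: the paper also invokes Lemma~\ref{lem:LRCad-NP} for all cases except $k=3$, $|\Sigma|=2$ without a prescribed character, and for that case it uses exactly your construction $S=1\cdot 0^{|P|}\cdot P\cdot P''$ with $L=\{1\}$, $R$ the block of $P''$, and $P''$ obtained by duplicating each character of the padded $P'$. The index arithmetic you outline ($d=|P|+l$, $S[1+d]=P[l]$, $S[1+2d]=P''[2l]=P'[l]$) is precisely the paper's argument.
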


Since the equidistant subsequence matching problem is closely related to sub-cadences, we can similarly show that equidistant subsequence matching with patterns of length $3$ on ternary strings is $\mathcal{NP}$-complete on grammar-compressed strings. 

Consider the pattern $P=212$, and a string $S$ with $S[L], S[R]\in\{0,2\}^*$ and all other characters are either $0$ or $1$. Define $S'$ by
\[
S'[i] = \begin{cases}
0 & \textup{if } S[i]=0\\
1 & \textup{if } S[i]\neq 0
\end{cases}
\]
the string in which all ``$2$''s in $S$ are replaced by a ``$1$''.
In this setting, the equidistant occurrences of $P$ in $S$ are exactly the $L$-$R$-$3$-cadences with character $1$ in $S'$.

All reductions above used that we could force all cadences to use a fixed character of the string. However, surprisingly, if $L$ and $R$ have similar length, we can detect in polynomial time, whether a compressed binary string has an $L$-$R$-$3$-cadence. Furthermore, with the same idea we can detect in linear time, whether an uncompressed binary string has an $L$-$R$-$3$-cadence.

The remainder of this section proves the following theorem:

\begin{theorem} \label{thm:LRCad-P}
The decision problem of $L$-$R$-$3$-cadence detection in binary grammar-compressed strings can be solved in polynomial time with respect to the compressed size of the string and the additional variable $\max\left(\frac{|L|}{|R|},\frac{|R|}{|L|}\right)$.

The decision problem of $L$-$R$-$3$-cadence detection in binary uncompressed strings can be solved linear time with respect to $|L|+|R|$.
\end{theorem}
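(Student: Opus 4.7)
The strategy I would pursue is a two-level reduction. First, reduce the general problem to the balanced case $|L|\approx|R|$; then solve the balanced case by exploiting the binary alphabet structure.

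For the reduction, assume without loss of generality $|L|\le|R|$ and let $t=\lceil|R|/|L|\rceil$. Partition $R$ into $t$ consecutive sub-intervals $R_1,\dots,R_t$, each of length at most $|L|$; then an $L$-$R$-$3$-cadence exists iff some $L$-$R_j$-$3$-cadence does. Each sub-problem is balanced in the sense that $|L|=\Theta(|R_j|)$. In the compressed setting, solving the $t$ sub-problems contributes the multiplicative factor $t=O(\max(|L|/|R|,|R|/|L|))$ to the running time. In the uncompressed setting, the $R_j$ partition $R$ disjointly, so the total work stays $O(|L|+|R|)$ provided each sub-problem runs in $O(|L|+|R_j|)$ time.

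For the balanced case, the key observation is that the set of candidate midpoints $M=\{(i+j)/2 : i\in L,\,j\in R,\,i+j\text{ even}\}$ is an integer interval of length $O(|L|+|R|)$. I would split the search by the cadence's character $c\in\{0,1\}$ and ask whether there exist $i\in L$, $m\in M$, $j\in R$ with $i+j=2m$ and $S[i]=S[m]=S[j]=c$. In the uncompressed setting this reduces to a pairing problem on the runs of $c$ in $S[L]$, $S[M]$, and $S[R]$: each pair of a $c$-run in $L$ and a $c$-run in $R$ contributes a contiguous interval of candidate midpoints, and one has to decide whether any such interval meets a $c$-run of $M$. Traversing the runs in their natural left-to-right order and exploiting the monotonicity of the induced midpoint intervals under shifts of the endpoint runs, one can organise a simultaneous sweep that answers the question in $O(|L|+|R_j|)$ time per sub-problem.

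For the grammar-compressed balanced case, the same conceptual algorithm must be executed without materialising the runs, since a grammar-compressed binary string can encode exponentially many of them in its compressed size. I would process the grammar bottom-up, associating with each nonterminal a compact summary describing its $c$-runs relative to the windows $L$, $M$, and $R$---such as prefix and suffix runs of each character and the positions of relevant landmarks---and invoke standard compressed random-access and longest-common-extension primitives to combine summaries across concatenation rules. I expect the careful design of these per-nonterminal summaries---so that they combine correctly under concatenation and so that the final existence test runs in time polynomial in the grammar size alone---to be the main technical obstacle of the proof.
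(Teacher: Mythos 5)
There is a genuine gap at the core of your plan: splitting the search by the cadence's character $c\in\{0,1\}$ converts the tractable problem into the intractable one. Detecting an $L$-$R$-$3$-cadence \emph{with a given character} in a grammar-compressed binary string is exactly one of the problems this paper proves $\mathcal{NP}$-complete (Lemma \ref{lem:LRCad-NP}, first bullet), and the reduction there already uses balanced intervals $|L|=|R|=n/3$, so your reduction to the balanced case does not escape the hardness. Consequently no per-nonterminal ``run summary'' of polynomial size can drive your fixed-character test: a compressed binary string may have exponentially many runs, and if such summaries existed the $\mathcal{NP}$-complete problem would be in $\mathcal{P}$. The uncompressed half of your argument is also shaky for the same structural reason: the fixed-character question ``do there exist $i\in L$, $j\in R$ with $i+j=2m$ and $S[i]=S[m]=S[j]=c$'' is the $3$SUM-type problem that Amir et al.\ showed should not be solvable in $o(n\log n)$; there can be $\Theta(|L|\cdot|R|)$ pairs of $c$-runs, and the claimed monotone sweep over their induced midpoint intervals is not justified and would contradict that conditional lower bound if it worked in $O(|L|+|R_j|)$ time.

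The idea your proposal is missing is precisely that one must stay character-agnostic. The paper first splits by the \emph{parity} of the first and third indices (not by character), and then uses the binary alphabet at the midpoint: if $S[i]=S[j]\neq S[i+2]=S[j-2]$ with $i,i+2\in L$, $j,j-2\in R$ and $i\equiv j\pmod 2$, then $S[\frac{i+j}{2}]$ must equal one of the two values, so a cadence of \emph{some} character exists without ever deciding which (Lemma \ref{lem:LR2}). This forces $S[L_{\operatorname{even}}]$ or $S[R_{\operatorname{even}}]$ to be highly structured (at most two runs, or a single power) whenever no cadence exists (Corollary \ref{cor:LReasiness}), and only in those degenerate situations does the paper run an explicit search (Lemmas \ref{lem:LR01} and \ref{lem:LR0}, Corollary \ref{cor:LRalgo}), shrinking $R$ by at least $\frac{|L|}{2}$ positions per round to obtain the $\mathcal{O}\bigl(\frac{|R|}{|L|}\bigr)$ factor. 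Your outer accounting (balancing $|L|$ against pieces of $R$) resembles the paper's iteration count, but without the character-agnostic midpoint lemma the inner subroutine you propose cannot be made to run in polynomial time on compressed input.
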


Since the first index and the third index of each $3$-sub-cadence have the same parity, it is useful to divide the $L$-$R$-$3$-cadences according to this parity:

\begin{definition}
    For two disjoint intervals $L$ and $R$, an \emph{even $L$-$R$-$3$-cadence} is a $3$-sub-cadence which starts at an even index in $L$ and ends at an even index in $R$.
    
    Similarly, an \emph{odd $L$-$R$-$3$-cadence} is a $3$-sub-cadence which starts at an odd index in $L$ and ends at an odd index in $R$.
    
    For each set $M$,  we define $M_{\operatorname{even}}:=M\cap 2\mathbb{Z}$ and $M_{\operatorname{odd}}:=M\cap (2\mathbb{Z}+1)$ and for each $M=\{a_1, a_2, \dots, a_l\}\subset \mathbb{Z}$ with $1\leq a_1<a_2<a_3<\dots<a_l\leq n$, we define the string $S[M]=S[a_1]S[a_2]\dots S[a_l]$ as the subsequence of characters with indices given by $M$.
\end{definition}

The key insight for the detection algorithm for $L$-$R$-$3$-cadences is that if the string does not contain $L$-$R$-$3$-cadences, either $S[L_{\operatorname{even}}]$ or $S[R_{\operatorname{even}}]$ is very structured. The following lemma implies that if $S[L_{\operatorname{even}}]$ has the substring $01$ and $S[R_{\operatorname{even}}]$ has the substring $10$ or vice versa, then $S$ has an $L$-$R$-$3$-cadence:

\begin{lemma} \label{lem:LR2}
    Let $S$ be a binary string and $L$ and $R$ be two intervals.
    
    If there are indices $i$ and $j$ with 
    \begin{itemize}
        \item $S[i]=S[j]\neq S[i+2]=S[j-2]$,
        \item $i, i+2\in L$, 
        \item $j, j-2\in R$ and 
        \item $i\equiv j \pmod 2$, 
    \end{itemize}
    then $S$ has an $L$-$R$-$3$-cadence.
\end{lemma}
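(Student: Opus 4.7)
The plan is to construct an $L$-$R$-$3$-cadence explicitly by selecting among two natural candidates. First, I would set $a:=S[i]=S[j]$ and $b:=S[i+2]=S[j-2]$, which are the two distinct letters of the binary alphabet, and note that $m:=\tfrac{i+j}{2}$ is an integer because $i\equiv j\pmod 2$. The central observation I would invoke is that $m$ equals both $\tfrac{i+j}{2}$ and $\tfrac{(i+2)+(j-2)}{2}$, so $m$ is simultaneously the middle term of the arithmetic progression $(i,m,j)$ and of the progression $(i+2,m,j-2)$. The outer positions of the first progression carry the character $a$, and those of the second carry $b$.

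Next, I would perform a case split on $S[m]$. Since the string is binary, $S[m]$ equals either $a$ or $b$, so one of the two progressions is a valid $3$-sub-cadence: if $S[m]=a$, I pick the triple $\bigl(i,\tfrac{j-i}{2},3\bigr)$; otherwise I pick $\bigl(i+2,\tfrac{j-i}{2}-2,3\bigr)$. In both cases the starting index lies in $L$ (because $i,i+2\in L$) and the ending index lies in $R$ (because $j,j-2\in R$), so each candidate qualifies as an $L$-$R$-$3$-cadence provided its common difference is strictly positive.

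Finally, for positivity I would argue that the disjointness of the two intervals forces $L$ to precede $R$ (otherwise no $L$-$R$-$3$-cadence can exist at all), and then $i+2\leq\max L<\min R\leq j-2$. Combined with $i\equiv j\pmod 2$ this yields $j-i\geq 6$, so both $\tfrac{j-i}{2}$ and $\tfrac{j-i}{2}-2$ are strictly positive, completing the construction. The only conceptual step is spotting that the two candidate progressions share the common middle index $m$; once this is noticed, the binary alphabet immediately forces one of them to succeed, and I do not expect any real obstacle.
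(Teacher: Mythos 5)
Your proof is correct and follows essentially the same route as the paper's: both identify $m=\frac{i+j}{2}$ as the common midpoint of the progressions $(i,m,j)$ and $(i+2,m,j-2)$ and use the binary alphabet to force $S[m]$ to complete one of them. Your additional verification that the common difference is strictly positive is a detail the paper's proof leaves implicit, but it does not change the argument.
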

\begin{proof}
    Since $i\equiv j \pmod 2$ holds, the number $\frac{i+j}{2}$ is an integer. Furthermore, since $S$ is binary and $S[i]=S[j]\neq S[i+2]=S[j-2]$ holds, we either have $S[i]=S[\frac{i+j}{2}]=S[j]$ or $S[i+2]=S[\frac{i+j}{2}]=S[j-2]$. Therefore, there is at least one $L$-$R$-$3$-cadence.
\end{proof}

This implies that if $S$ does not contain $L$-$R$-$3$-cadences, then there are only few possibilities for the subsequences $S[L_{\operatorname{even}}]$ and $S[R_{\operatorname{even}}]$:

\begin{corollary} \label{cor:LReasiness}
    Let $S$ be a binary string and $L$ and $R$ be two intervals such that $S$ has no $L$-$R$-$3$-cadences.
    
    Then,
    \begin{itemize}
        \item if $S[L_{\operatorname{even}}]$ is of the form $0^i 1^{i'}$ with $i, i'>0$, then $S[R_{\operatorname{even}}]$ is of the form $0^j 1^{j'}$ where $j$ and $j'$ may be equal to $0$,
        \item if $S[L_{\operatorname{even}}]$ is of the form $1^i 0^{i'}$ with $i, {i'}>0$, then $S[R_{\operatorname{even}}]$ is of the form $1^j 0^{j'}$ where $j$ and ${j'}$ may be equal to $0$, and
        \item if $S[L_{\operatorname{even}}]$ contains the substrings $01$ and $10$, then $S[R_{\operatorname{even}}]$ is of the form $0^{j}$ or $1^{j}$.
    \end{itemize}
\end{corollary}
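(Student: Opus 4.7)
The plan is to invoke Lemma~\ref{lem:LR2} in contrapositive form. Since $L$ and $R$ are intervals, the even indices in $L$ form an arithmetic progression with common difference $2$, so consecutive characters of $S[L_{\operatorname{even}}]$ correspond exactly to pairs $i, i+2 \in L$ with $i$ even; the same holds for $R$. Under the hypothesis that $S$ has no $L$-$R$-$3$-cadence, Lemma~\ref{lem:LR2} forbids the simultaneous existence of even indices $i,i+2 \in L$ and $j-2,j\in R$ with $S[i]=S[j]$, $S[i+2]=S[j-2]$, and $S[i]\neq S[i+2]$. Rephrased on the subsequences, this says: it is impossible to have both a ``$01$'' substring in $S[L_{\operatorname{even}}]$ and a ``$10$'' substring in $S[R_{\operatorname{even}}]$, and likewise it is impossible to have both a ``$10$'' in $S[L_{\operatorname{even}}]$ and a ``$01$'' in $S[R_{\operatorname{even}}]$.

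With this reformulation the three items are immediate. For the first item, $S[L_{\operatorname{even}}]=0^i1^{i'}$ with $i,i'>0$ contains a ``$01$'' substring, so by the forbidden combination $S[R_{\operatorname{even}}]$ cannot contain ``$10$''; a binary string without a ``$10$'' substring is exactly a string of the form $0^j1^{j'}$ where $j$ or $j'$ may vanish. The second item is symmetric: $S[L_{\operatorname{even}}]=1^i0^{i'}$ contains ``$10$'', forbidding ``$01$'' in $S[R_{\operatorname{even}}]$, which forces it into the form $1^j0^{j'}$. For the third item, $S[L_{\operatorname{even}}]$ contains both ``$01$'' and ``$10$'', so $S[R_{\operatorname{even}}]$ may contain neither, and a binary string with no transitions between $0$ and $1$ is either $0^j$ or $1^j$.

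The only nontrivial step is verifying the correspondence between substrings of $S[L_{\operatorname{even}}]$ (resp.\ $S[R_{\operatorname{even}}]$) and the concrete even indices used by Lemma~\ref{lem:LR2}; this relies on $L$ and $R$ being intervals so that adjacent letters of the subsequence really come from indices differing by $2$, and on the fact that two such pairs picked independently from $L$ and $R$ automatically satisfy $i\equiv j \pmod 2$ since both are even. Once this bookkeeping is done, the corollary follows directly by applying the contrapositive of Lemma~\ref{lem:LR2} to the four possible witness combinations and enumerating which shapes of $S[R_{\operatorname{even}}]$ remain admissible.
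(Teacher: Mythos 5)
Your proof is correct and follows the same route as the paper, which presents Corollary~\ref{cor:LReasiness} as an immediate consequence of Lemma~\ref{lem:LR2} without spelling out the details. Your explicit contrapositive formulation (the forbidden combinations of a ``$01$'' in one of $S[L_{\operatorname{even}}]$, $S[R_{\operatorname{even}}]$ together with a ``$10$'' in the other) and the bookkeeping about intervals and parities correctly fill in exactly what the paper leaves implicit.
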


We can check in linear time in uncompressed strings and in polynomial time in grammar-compressed strings whether $S[L_{\operatorname{even}}]$ and $S[R_{\operatorname{even}}]$ are of the form $0^i 1^{j}$ or $1^i 0^{j}$. If both $S[L_{\operatorname{even}}]$ and $S[R_{\operatorname{even}}]$ are of the form $0^i 1^{j}$, we can divide $L_{\operatorname{even}}$ and $R_{\operatorname{even}}$ into $L'_{\operatorname{even}}$, $L''_{\operatorname{even}}$, $R'_{\operatorname{even}}$ and $R''_{\operatorname{even}}$ such that $S[L'_{\operatorname{even}}]=0^{i}$, $S[L''_{\operatorname{even}}]=1^{i'}$, $S[R'_{\operatorname{even}}]=0^{j}$ and $S[R''_{\operatorname{even}}]=1^{j'}$. 

Since there are, by construction, no even $L'$-$R''$-$3$-cadences and no even $L''$-$R'$-$3$-cadences, we only have to detect $L'$-$R'$-$3$-cadences and $L''$-$R''$-$3$-cadences. This can be done in linear time in uncompressed strings and in polynomial time in grammar-compressed strings using the following lemma, which holds by definition of the even $L$-$R$-$3$-cadence:

\begin{lemma} \label{lem:LR01}
     Let $S$ be a binary string and $L$ and $R$ be two intervals such that $S[L_{\operatorname{even}}]=0^i$ and $S[R_{\operatorname{even}}]=0^j$ hold for some integers $i$, $j$. Let further $l_{\min} = \min(L_{\operatorname{even}})$, $l_{\max} = \max(L_{\operatorname{even}})$, $r_{\min} = \min(R_{\operatorname{even}})$ and $r_{\max} = \max(R_{\operatorname{even}})$.
     
     Then, $S$ has an even $L$-$R$-$3$-cadence if and only if 
     \[S\left[\left\{\frac{l_{\min}+r_{\min}}{2}, \frac{l_{\min}+r_{\min}}{2}+1, \dots, \frac{l_{\max}+r_{\max}}{2}\right\}\right]\neq 1^{\left(\frac{l_{\max}+r_{\max}}{2}-\frac{l_{\min}+r_{\min}}{2}+1\right)}\]
     holds.
\end{lemma}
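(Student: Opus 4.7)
The plan is to exploit the hypotheses $S[L_{\operatorname{even}}]=0^i$ and $S[R_{\operatorname{even}}]=0^j$ to reduce the existence of an even $L$-$R$-$3$-cadence to the existence of a single $0$ in a certain range of indices, and then to identify that range with the interval shown in the displayed inequality.

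First I would note that any even $L$-$R$-$3$-cadence is a triple $(a,d,3)$ with $a\in L_{\operatorname{even}}$ and $b:=a+2d\in R_{\operatorname{even}}$, whose middle index is $m=a+d=(a+b)/2$. Since $S[a]=S[b]=0$ hold automatically under the hypotheses, the three-equal-characters condition collapses to $S[m]=0$. Hence an even $L$-$R$-$3$-cadence exists if and only if some achievable middle index $m$ satisfies $S[m]=0$.

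Second I would pin down exactly which indices arise as middle indices. Because $L$ and $R$ are intervals, $L_{\operatorname{even}}=\{l_{\min}, l_{\min}+2,\dots,l_{\max}\}$ and $R_{\operatorname{even}}=\{r_{\min}, r_{\min}+2,\dots,r_{\max}\}$. Writing $a=l_{\min}+2p$ and $b=r_{\min}+2q$ gives $m=(l_{\min}+r_{\min})/2+p+q$, and as $(p,q)$ ranges independently over its rectangle of allowed values, the sum $p+q$ assumes every integer from $0$ to $((l_{\max}-l_{\min})+(r_{\max}-r_{\min}))/2$. Hence the set of achievable middle indices is exactly the uninterrupted integer interval $\{(l_{\min}+r_{\min})/2,\dots,(l_{\max}+r_{\max})/2\}$; the endpoints are integers because $l_{\min}+r_{\min}$ and $l_{\max}+r_{\max}$ are even. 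Combining the two observations, an even $L$-$R$-$3$-cadence exists iff this integer interval contains an index at which $S$ takes the value $0$, which is precisely the displayed inequality $S[\dots]\neq 1^{\dots}$.

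I do not expect a serious obstacle: the lemma is essentially bookkeeping, and the only step requiring any care is verifying that the Minkowski sum of two arithmetic progressions with common difference $2$ yields a full interval of consecutive integers after division by $2$, which is immediate from the parametrisation above. An implicit assumption throughout, consistent with the rest of this section, is that $L$ lies to the left of $R$ so that $d=(b-a)/2>0$ holds automatically; the opposite orientation can be handled by symmetry or is vacuous.
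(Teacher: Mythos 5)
Your proof is correct and matches the paper's approach: the paper gives no explicit proof, stating only that the lemma ``holds by definition of the even $L$-$R$-$3$-cadence'', and your argument is precisely the bookkeeping that justifies this --- the endpoints are forced to be $0$, and the achievable middle indices form exactly the consecutive integer interval $\left\{\frac{l_{\min}+r_{\min}}{2},\dots,\frac{l_{\max}+r_{\max}}{2}\right\}$. Your explicit remark that the Minkowski sum of the two even-index progressions halves to a full interval, and your note on the orientation of $L$ and $R$, are exactly the details the paper leaves implicit.
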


\begin{figure}
    \centering
    \begin{tikzpicture}[scale=1]
    \draw [ultra thick] (0,4) -- (0,0);
    \draw [ultra thick] (0,4) -- (1,0);
    \draw [ultra thick] (0,4) -- (2,0);
    
    \draw (-5,4) node [fill=white,anchor=west] {$S[\{1,2,\dots,16\}_{\operatorname{even}}]$:};
    \draw (0,4) node [fill=white] {0};
    \draw (1,4) node [fill=white] {1};
    \draw (2,4) node [fill=white] {1};
    \draw (3,4) node [fill=white] {1};
    \draw (4,4) node [fill=white] {1};
    \draw (5,4) node [fill=white] {1};
    \draw (6,4) node [fill=white] {1};
    \draw (7,4) node [fill=white] {1};
    
    \draw (-5,2) node [fill=white,anchor=west] {$S[\{17,18,\dots,32\}]$:};
    \draw (-0.5,2) node [fill=white] {1};
    \draw (0,2) node [fill=white] {1};
    \draw (0.5,2) node [fill=white] {1};
    \draw (1,2) node [fill=white] {0};
    \draw (1.5,2) node [fill=white] {1};
    \draw (2,2) node [fill=white] {1};
    \draw (2.5,2) node [fill=white] {0};
    \draw (3,2) node [fill=white] {1};
    \draw (3.5,2) node [fill=white] {1};
    \draw (4,2) node [fill=white] {0};
    \draw (4.5,2) node [fill=white] {0};
    \draw (5,2) node [fill=white] {1};
    \draw (5.5,2) node [fill=white] {0};
    \draw (6,2) node [fill=white] {1};
    \draw (6.5,2) node [fill=white] {0};
    \draw (7,2) node [fill=white] {0};
    
    \draw (-5,0) node [fill=white,anchor=west] {$S[\{33,34,\dots,48\}_{\operatorname{even}}]$:};
    \draw (0,0) node [fill=white] {0};
    \draw (1,0) node [fill=white] {1};
    \draw (2,0) node [fill=white] {1};
    \draw (3,0) node [fill=white] {0};
    \draw (4,0) node [fill=white] {0};
    \draw (5,0) node [fill=white] {1};
    \draw (6,0) node [fill=white] {0};
    \draw (7,0) node [fill=white] {1};
    \end{tikzpicture}
    \caption{A string with $48$ characters. For $L=\{2\}$ and $R=\{33,34,\dots,48\}$, for each index of $R_{\operatorname{even}}$, there is only one candidate $(i,d,k)$ for forming an $L$-$R$-$3$-cadence.}
    \label{fig:LRExample}
\end{figure}
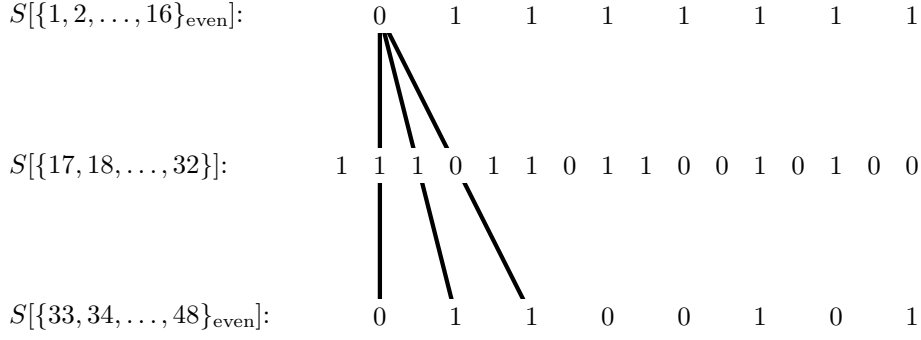

The more difficult case is that one of the two subsequences, without loss of generality $S[R_{\operatorname{even}}]$, is more complex. I.e.\ it consists of multiple runs of $0$s and $1$s and thereby contains both substrings $01$ and $10$. In this case, in order to avoid $L$-$R$-$3$-cadences, the other subsequence $S[L_{\operatorname{even}}]$ is a power of a single character, without loss of generality $0$. This can be checked in linear time in uncompressed strings and in polynomial time in grammar-compressed strings.

It should be no surprise that this case is more difficult since this case occurred in the proof of the $\mathcal{NP}$-completeness of the compressed $L$-$R$-$3$-cadence detection problem. Figure \ref{fig:LRExample} shows that if $L$ is a short interval, we have to check linearly many pairs with respect to $n$ in order to find an $L$-$R$-$3$-cadence. In order to develop a polynomial time algorithm for grammar-compressed strings, we have to use that $L$ is roughly as long as $R$.

By definition of the $L$-$R$-$3$-cadence we get the following lemma:

\begin{lemma} \label{lem:LR0}
    Let $S$ be a binary string and $L$ and $R$ be two intervals. Let further $S[L_{\operatorname{even}}]$ be of the form $0^i$. Define $l_{\min} = \min(L_{\operatorname{even}})$, $l_{\max} = \max(L_{\operatorname{even}})$, $r_{\min} = \min(R_{\operatorname{even}})$ and $r_{\max} = \max(R_{\operatorname{even}})$.
    
    Then, for any $r_0\in R_{\operatorname{even}}$ with $S[r_0]=0$ there is an even $L$-$R$-$3$-cadence which uses this $0$ as last element if and only if
    \[S\left[\left\{\frac{l_{\min}+r_{0}}{2}, \frac{l_{\min}+r_{0}}{2}+1, \dots, \frac{l_{\max}+r_{0}}{2}\right\}\right]\neq 1^{\left(\frac{l_{\max}+r_{0}}{2}-\frac{l_{\min}+r_{0}}{2}+1\right)}\]
    holds.
    
    Conversely, for any $m_0\in \left\{\frac{l_{\min}+r_{\min}}{2}, \frac{l_{\min}+r_{\min}}{2}+1, \dots, \frac{l_{\max}+r_{\max}}{2}\right\}$ with $S[m_0]=0$ there is an even $L$-$R$-$3$-cadence which uses this $0$ as middle element if and only if
    \[S\left[\left\{\max\left(2m_0-l_{\max},r_{\min}\right), \max\left(2m_0-l_{\max},r_{\min}\right)+2, \dots, \min\left(2m_0-l_{\min},r_{\max}\right)\right\}\right]\]
    is not of the form $1^j$.
\end{lemma}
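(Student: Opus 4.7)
The plan is to unwind the definition of an even $L$-$R$-$3$-cadence and exploit the hypothesis $S[L_{\operatorname{even}}]=0^i$ to reduce the condition to a single equality check. An even $L$-$R$-$3$-cadence has the form $(l,m,r)$ with $l\in L_{\operatorname{even}}$, $r\in R_{\operatorname{even}}$, and $m=\tfrac{l+r}{2}$, which is an integer because $l$ and $r$ are both even. Validity requires $S[l]=S[m]=S[r]$, but since $S[l]=0$ for every $l\in L_{\operatorname{even}}$ by assumption, this collapses to requiring $S[m]=S[r]=0$.

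For the first equivalence I will fix $r_0\in R_{\operatorname{even}}$ with $S[r_0]=0$ and let $l$ sweep through $L_{\operatorname{even}}=\{l_{\min},l_{\min}+2,\ldots,l_{\max}\}$ in steps of $2$. The corresponding midpoint $m=\tfrac{l+r_0}{2}$ then sweeps through $\left\{\tfrac{l_{\min}+r_0}{2},\tfrac{l_{\min}+r_0}{2}+1,\ldots,\tfrac{l_{\max}+r_0}{2}\right\}$ in steps of $1$, and each $l$ gives a valid cadence ending at $r_0$ iff $S[m]=0$. Hence at least one such cadence exists iff the subsequence of $S$ over this index set contains a $0$, which is precisely the displayed non-equality with $1^{(\ldots)}$.

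For the second equivalence I will fix $m_0$ with $S[m_0]=0$ and again let $l$ sweep over $L_{\operatorname{even}}$, taking $r=2m_0-l$. Since $l$ is even and $m_0$ is an integer, $r$ is automatically even, so $r\in R_{\operatorname{even}}$ whenever $r_{\min}\leq r\leq r_{\max}$. Combining $l_{\min}\leq l\leq l_{\max}$ with $r_{\min}\leq r\leq r_{\max}$ forces $r$ into the interval $[\max(2m_0-l_{\max},r_{\min}),\min(2m_0-l_{\min},r_{\max})]$, and as $l$ steps by $2$ so does $r$. Both bounds are even, so the reachable values of $r$ form exactly the arithmetic progression displayed in the lemma. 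A cadence with middle $m_0$ exists iff at least one such $r$ satisfies $S[r]=0$, i.e., iff the subsequence is not of the form $1^j$.

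The only real obstacle is bookkeeping: verifying that the parities of the sweep endpoints line up as claimed and that the intersection of the sweep range with $[r_{\min},r_{\max}]$ is indeed the set written in the lemma (also covering the degenerate case where the intersection is empty, in which case the subsequence is the empty string $1^0$ and the lemma correctly reports that no cadence exists). No combinatorial content beyond this is needed, because the assumption $S[L_{\operatorname{even}}]=0^i$ removes the first element from consideration entirely.
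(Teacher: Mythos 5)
Your proof is correct and matches the paper's approach: the paper presents Lemma~\ref{lem:LR0} as holding ``by definition of the $L$-$R$-$3$-cadence,'' and your argument is exactly that definition-unwinding made explicit, using $S[L_{\operatorname{even}}]=0^i$ to reduce the cadence condition to $S[m]=S[r]=0$ and sweeping $l$ over $L_{\operatorname{even}}$ to identify the relevant index sets.
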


With Corollary \ref{cor:LReasiness}, Lemma \ref{lem:LR01} and Lemma \ref{lem:LR0}, it is possible to efficiently either find an $L$-$R$-$3$-cadence or to shorten the complex interval $R$ without removing any $L$-$R$-$3$-cadences.

\begin{corollary} \label{cor:LRalgo}
    Let $S$ be a binary string and $L$ and $R$ be two intervals. Let further $S[L_{\operatorname{even}}]$ be of the form $0^i$. Define $l_{\min} = \min(L_{\operatorname{even}})$, $l_{\max} = \max(L_{\operatorname{even}})$, $r_{\min} = \min(R_{\operatorname{even}})$ and $r_{\max} = \max(R_{\operatorname{even}})$.
    
    If $S[R_{\operatorname{even}}]$ is of the form $1^j$, there is no even $L$-$R$-$3$-cadence.
    
    Otherwise, define $r_0=\min\left(r\in R_{\operatorname{even}}| S[r]=0\right)$.
    If $S\left[\left\{\frac{l_{\min}+r_{0}}{2}, \frac{l_{\min}+r_{0}}{2}+1, \dots, \frac{l_{\max}+r_{0}}{2}\right\}\right]$ contains a $0$, the corresponding index forms an $L$-$R$-$3$-cadence with an index of $L_{\operatorname{even}}$ and $r_0$.
    
    Otherwise, there is no $L$-$R$-$3$-cadence using $r_0$ as third index, and furthermore, if the substring $S\left[\left\{\frac{l_{\max}+r_{0}}{2}+1, \frac{l_{\max}+r_{0}}{2}+2, \dots, \frac{l_{\max}+r_{\max}}{2}\right\}\right]$ of $S$ is of the form $1^j$, then there is no even $L$-$R$-$3$-cadence.
    
    Otherwise, define $m_0=\min\left(m\in\left\{\frac{l_{\max}+r_{0}}{2}+1, \frac{l_{\max}+r_{0}}{2}+2, \dots, \frac{l_{\max}+r_{\max}}{2}\right\}| S[m]=0\right)$.
    If $S\left[\left\{2m_0-l_{\max}, 2m_0-l_{\max}+1, \dots, \min\left(2m_0-l_{\min},r_{\max}\right)\right\}\right]$ contains a $0$, the corresponding index forms an $L$-$R$-$3$-cadence with an index of $L_{\operatorname{even}}$ and $m_0$.
    
    Otherwise, define $R'=R \cap \mathbb{Z}_{>2m_0-l_{\min}}$. There is an even $L$-$R$-$3$-cadence if and only if there is an even $L$-$R'$-$3$-cadence.
\end{corollary}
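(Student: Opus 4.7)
The plan is to walk through the branches of the Corollary in order, at each branch either exhibiting an $L$-$R$-$3$-cadence directly or applying Lemma \ref{lem:LR0} to extract structural information that shrinks the candidate set. Throughout, since $S[L_{\operatorname{even}}]=0^i$, every even $L$-$R$-$3$-cadence $(l,m,r)$ must satisfy $S[l]=S[m]=S[r]=0$, and I will use this repeatedly.

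The first branch is immediate: if $S[R_{\operatorname{even}}]=1^j$, then no $r\in R_{\operatorname{even}}$ has $S[r]=0$, so no cadence can end in $R_{\operatorname{even}}$. For the second branch, with $r_0$ the smallest $0$ in $S[R_{\operatorname{even}}]$, I would apply the first half of Lemma \ref{lem:LR0}: the index $r_0$ is the last element of an even cadence if and only if the displayed range $\{(l_{\min}+r_0)/2,\dots,(l_{\max}+r_0)/2\}$ contains a $0$, and each such position $m$ produces the cadence $(2m-r_0,m,r_0)$ with $2m-r_0\in L_{\operatorname{even}}$.

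If this search fails, that middle range is $1^j$. Since $S[r]=1$ for every $r\in R_{\operatorname{even}}$ with $r<r_0$, any remaining cadence has last element $r\geq r_0+2$ and hence a middle $m\geq(l_{\min}+r_0)/2+1$ with $S[m]=0$. Combined with the all-$1$s conclusion just obtained, this forces $m>(l_{\max}+r_0)/2$, so $m\in\{(l_{\max}+r_0)/2+1,\dots,(l_{\max}+r_{\max})/2\}$; if this range is also $1^j$, no valid middle exists and no cadence is possible.

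The delicate part is the final reduction. I would pick $m_0$ to be the smallest $0$ in the range just described and apply the second half of Lemma \ref{lem:LR0}. A $0$ in the listed step-$1$ superset of valid last elements yields, at the corresponding even position, a cadence using $m_0$ as middle. If no $0$ is found there, then in particular the step-$2$ range attached to $m_0$ by Lemma \ref{lem:LR0} is $1^j$, so $m_0$ is not the middle of any cadence. The main obstacle is to justify $R'=R\cap\mathbb{Z}_{>2m_0-l_{\min}}$: I would argue that any remaining cadence must have middle $m>m_0$, so its last element satisfies $r=2m-l\geq 2m_0-l_{\max}+2$, and if moreover $r\leq 2m_0-l_{\min}$ then $r=2m_0-l'$ for some $l'\in L_{\operatorname{even}}$, placing $r$ in the step-$2$ range just shown to have $S[\cdot]=1$, contradicting $S[r]=0$. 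Hence no cadence is lost by passing to $R'$, and the equivalence claimed by the Corollary follows.
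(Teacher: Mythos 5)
Your proof is correct and follows exactly the route the paper intends: the corollary appears there without a separate written proof, as a direct consequence of Lemma \ref{lem:LR0} combined with the minimality of $r_0$ and $m_0$, and this is precisely the chain of deductions you reconstruct, including the essential step of localizing the middle element of any surviving cadence strictly beyond $\frac{l_{\max}+r_0}{2}$ and showing that its last element would land inside the already-checked range, so that passing to $R'$ loses nothing. The only loose point, which you share with the statement itself, is the $m_0$-branch, where the step-$1$ range may contain a $0$ at an odd position $p$ for which $2m_0-p\notin L_{\operatorname{even}}$; your phrase ``at the corresponding even position'' quietly restricts attention to the even case, which is all that is needed for the reduction to $R'$ in the final branch to remain sound.
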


An application of this corollary can be seen in Figure \ref{fig:LRExample2}.

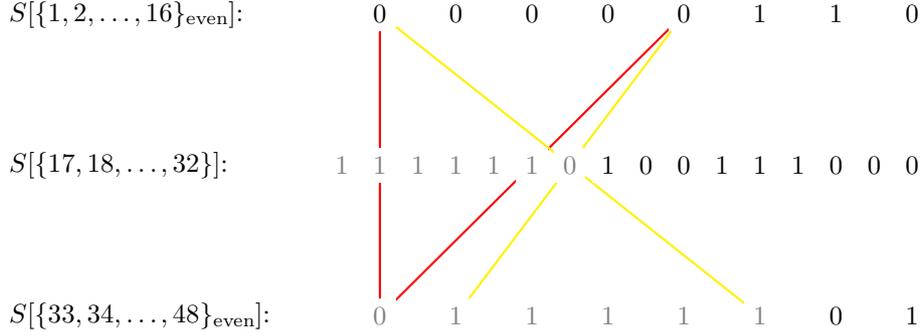
\begin{figure}
    \centering
    \begin{tikzpicture}[scale=1]
    \draw [thick, color=red] (0,4) -- (0,0) -- (4,4);
    
    \draw [thick, color=yellow] (0,4) -- (5,0);
    \draw [thick, color=yellow] (4,4) -- (1,0);
    
    
    \draw (-5,4) node [fill=white,anchor=west] {$S[\{1,2,\dots,16\}_{\operatorname{even}}]$:};
    \draw (0,4) node [fill=white] {0};
    \draw (1,4) node [fill=white] {0};
    \draw (2,4) node [fill=white] {0};
    \draw (3,4) node [fill=white] {0};
    \draw (4,4) node [fill=white] {0};
    \draw (5,4) node [fill=white] {1};
    \draw (6,4) node [fill=white] {1};
    \draw (7,4) node [fill=white] {0};
    
    \draw (-5,2) node [fill=white,anchor=west] {$S[\{17,18,\dots,32\}]$:};
    \draw (-0.5,2) node [fill=white, text=gray] {1};
    \draw (0,2) node [fill=white, text=gray] {1};
    \draw (0.5,2) node [fill=white, text=gray] {1};
    \draw (1,2) node [fill=white, text=gray] {1};
    \draw (1.5,2) node [fill=white, text=gray] {1};
    \draw (2,2) node [fill=white, text=gray] {1};
    \draw (2.5,2) node [fill=white, text=gray] {0};
    \draw (3,2) node [fill=white] {1};
    \draw (3.5,2) node [fill=white] {0};
    \draw (4,2) node [fill=white] {0};
    \draw (4.5,2) node [fill=white] {1};
    \draw (5,2) node [fill=white] {1};
    \draw (5.5,2) node [fill=white] {1};
    \draw (6,2) node [fill=white] {0};
    \draw (6.5,2) node [fill=white] {0};
    \draw (7,2) node [fill=white] {0};
    
    \draw (-5,0) node [fill=white,anchor=west] {$S[\{33,34,\dots,48\}_{\operatorname{even}}]$:};
    \draw (0,0) node [fill=white, text=gray] {0};
    \draw (1,0) node [fill=white, text=gray] {1};
    \draw (2,0) node [fill=white, text=gray] {1};
    \draw (3,0) node [fill=white, text=gray] {1};
    \draw (4,0) node [fill=white, text=gray] {1};
    \draw (5,0) node [fill=white, text=gray] {1};
    \draw (6,0) node [fill=white] {0};
    \draw (7,0) node [fill=white] {1};
    \end{tikzpicture}
    \caption{A string with $48$ characters after one application of Corollary \ref{cor:LRalgo}. Let $L=\{2,3,\dots, 10\}$ and $R=\{33,34,\dots,48\}$ be given. First, the index $r_0=34$ is found. The minimal and maximal candidates for $3$-cadences with $r_0$ are marked with red. Then, the index $m_0=23$ is found. The minimal and maximal candidates for $3$-cadences with $m_0$ are marked with yellow. Afterwards, the gray characters are guaranteed not to form a $3$-cadence with characters from the first run of the string.}
    \label{fig:LRExample2}
\end{figure}

By construction, the set $R'$ contains only elements greater than $2m_0-l_{\min}$. Therefore, either $2m_0-l_{\min}\geq r_{\max}$ and $R'$ is the empty set or $2m_0-l_{\min}< r_{\max}$ and $R'_{\operatorname{even}}$ contains at least
\[\left\lfloor\frac{1+2m_0-l_{\min}-r_{\min}}{2}\right\rfloor\geq \left\lfloor\frac{1+l_{\max}+r_0+2-l_{\min}-r_{\min}}{2}\right\rfloor > \frac{l_{\max}-l_{\min}}{2}\]
elements less than $R_{\operatorname{even}}$. Therefore, the algorithm described in Corollary \ref{cor:LRalgo} has to be used at most $\mathcal{O}\left(\frac{|R|}{|L|}\right)$ times.

On the other hand, the algorithm needs $\mathcal{O}\left(|L|+(r_0-r_{\min})+(m_0-\frac{l_{\min}+r_{\min}}{2})\right)$ time for uncompressed strings and polynomial time for grammar-compressed strings.

Also Corollary \ref{cor:LRalgo} removes at least $\frac{r_0-r_{\min}}{2}$ and at least $m_0-\frac{l_{\min}+r_{\min}}{2}$ from the set $R_{\operatorname{even}}$. Therefore, even if $L$ is small and  either $r_0-r_{\min}$ or $m_0-\frac{l_{\min}+r_{\min}}{2}$ is large, the detection of even $L$-$R$-$3$-cadences can be done in $\mathcal{O}\left(|L|+|R|\right)$ time in uncompressed strings.

By symmetry, the detection of odd $L$-$R$-$3$-cadences can also be done as the detection of even $L$-$R$-$3$-cadences.

This concludes the proof of Theorem \ref{thm:LRCad-P}.

\section{3-Cadences in Binary Strings}

In this section, I will show that the results of Theorem \ref{thm:LRCad-P} also hold for the corresponding $3$-cadence problems:

\begin{theorem}
The decision problem of $3$-cadence detection in binary grammar-compressed strings can be solved in polynomial time.

The decision problem of $3$-cadence detection in binary uncompressed strings can be solved linear time.
\end{theorem}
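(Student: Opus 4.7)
Plan. I would reduce 3-cadence detection on a binary string of length $n$ to a bounded number of $L$-$R$-$3$-cadence detection problems and invoke Theorem~\ref{thm:LRCad-P} on each. First I would parameterize a 3-cadence $(i, d, 3)$ by the positions $a = i$ and $c = i + 2d$ of its first and third elements. The structural-maximality conditions $i - d \leq 0$ and $i + 3d > n$ translate into the linear inequalities $c \geq 3a$ and $3c > 2n + a$; together with $1 \leq a$, $c \leq n$, and $c \equiv a \pmod{2}$, they cut out a polygonal region $\mathcal{R}$ in the $(a, c)$-plane, contained in $[1, \lfloor n/3 \rfloor] \times (\lfloor 2n/3 \rfloor, n]$. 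Every 3-cadence is, in particular, an $L$-$R$-$3$-sub-cadence for these extreme choices of $L$ and $R$.

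Next, I would cover $\mathcal{R}$ by a family $\{L^{(j)} \times R^{(j)}\}$ of rectangles with two properties: (a) $L^{(j)} \times R^{(j)} \subseteq \mathcal{R}$, so that every $L^{(j)}$-$R^{(j)}$-$3$-sub-cadence is automatically a genuine 3-cadence (no false positives); and (b) $|L^{(j)}|$ and $|R^{(j)}|$ are of the same order. The ``bulk'' rectangle $L = [1, \lfloor n/4 \rfloor]$, $R = [\lceil 3n/4 \rceil + 1, n]$ satisfies both properties, because for any $(a, c)$ in it we have $3a \leq 3n/4 < c$ and $3c > 9n/4 > 2n + a$. The two residual triangular pieces near $a = n/3$ and near $c = 2n/3$ would then be covered by a dyadic family of $O(\log n)$ rectangles of geometrically shrinking sizes, so that $\sum_j (|L^{(j)}| + |R^{(j)}|) = O(n)$. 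The parity constraint $c \equiv a \pmod 2$ is accommodated by further splitting each rectangle into its even and odd halves, mirroring the even/odd decomposition already used in Section~\ref{sec:LR}.

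Applying Theorem~\ref{thm:LRCad-P} to each rectangle then gives polynomial time in the compressed case (because $|L^{(j)}| \sim |R^{(j)}|$) and $O(|L^{(j)}| + |R^{(j)}|)$ time in the uncompressed case. Summing over all rectangles yields polynomial time in the compressed size of the string for grammar-compressed binary strings and linear time in $n$ for uncompressed binary strings.

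The main obstacle is the construction of the dyadic cover. A rectangle sitting strictly inside a triangular region necessarily leaves a similar but thinner triangle uncovered along the hypotenuse, which has to be covered by the next dyadic level. The delicate point is to show that the recursion terminates in $O(\log n)$ levels, that the successive rectangles can be chosen with $|L^{(j)}| \sim |R^{(j)}|$ throughout, and that the total size collapses to $\sum_j (|L^{(j)}| + |R^{(j)}|) = O(n)$ rather than $O(n \log n)$ once the geometric shrinking is summed. Verifying this, together with checking that the even/odd parity split doubles only the number of rectangles (not the asymptotic budget), is where the technical effort concentrates.
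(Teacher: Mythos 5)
There is a genuine gap, and it sits exactly where you locate the ``technical effort'': the dyadic cover of the two residual triangles by $\mathcal{O}(\log n)$ inscribed rectangles does not exist. The boundaries of your region $\mathcal{R}$ are the slanted lines $c=3a$ and $3c=2n+a$, and an axis-parallel rectangle $[a_1,a_2]\times[c_1,c_2]$ is contained in $\mathcal{R}$ only if its ``worst'' corner satisfies $c_1\geq 3a_2$ and $3c_1>2n+a_2$. Now take a lattice point of $\mathcal{R}$ lying on the line $c=3a$ (these correspond to genuine $3$-cadences with $i=d$, so they must be covered): any inscribed rectangle containing $(a,3a)$ must have $a_2\geq a$ and $c_1\leq 3a\leq 3a_2\leq c_1$, which pins $a_2=a$ and $c_1=3a$. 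Distinct boundary points therefore require distinct rectangles, and there are $\Theta(n)$ of them along the staircase, so any sound cover by inscribed rectangles has size $\Omega(n)$ --- exponential in the compressed size, and already enough to degrade the uncompressed bound to $\mathcal{O}(n\log n)$ rather than $\mathcal{O}(n)$ once you account for the per-rectangle work. You cannot relax property (a) either, since Theorem~\ref{thm:LRCad-P} only reports some $L$-$R$-$3$-sub-cadence, and a non-maximal witness tells you nothing about whether a maximal one exists in the same rectangle.

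The paper makes precisely this point: dissecting $3$-cadence detection into $L$-$R$-$3$-cadence detections needs $\Theta(n)$ pairs $(L,R)$ and hence does not compress. Its actual route is to re-derive the $L$-$R$ machinery with the structural-maximality constraints built in: Lemma~\ref{lem:30} replaces the fixed endpoints $l_{\max}$, $l_{\min}$ of Lemma~\ref{lem:LR0} by indices $l'_{\max}$, $l''_{\min}$, $l''_{\max}$ that depend on the candidate positions $r_0$ and $m_0$ (encoding $i-d\leq 0$ and $i+3d>n$ directly), and Corollary~\ref{cor:3algo} then shrinks the right interval exactly as Corollary~\ref{cor:LRalgo} does. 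The progress measure is also different from a cover: one shows that each application discards an interval of length about $r_0-\frac{2}{3}n$, which doubles until the threshold $r_1$ is reached, so $\mathcal{O}(\log n)$ iterations suffice; the remaining cadences (those not starting in the first run of $S_{\operatorname{even}}$, or ending beyond $r_1$) are handled by Lemma~\ref{lem:LR2}, Lemma~\ref{lem:LR01} and symmetry. To repair your argument you would have to abandon the static rectangle cover and adopt some such adaptive, data-dependent shrinking.
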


The main idea of the algorithm of Funakoshi and Pape-Lange in \cite{FPL_3-Cadences} for counting $3$-cadences in uncompressed strings was counting $L$-$R$-$3$-cadences for many pairs of $L$ and $R$. Therefore, we can use the detection algorithm for $L$-$R$-$3$-cadences given by Corollary \ref{cor:LRalgo} in order to detect a $3$-cadence in uncompressed binary strings in $\mathcal{O}(n \log n)$ time. 

However, since this algorithm uses $\Theta(n)$ pairs of $L$ and $R$, this approach does not translate into a polynomial time detection algorithm for $3$-cadences in compressed binary strings. Therefore, instead of dissecting the problem of $3$-cadence detection into many problems of $L$-$R$-$3$-cadence detection, we have to apply the ideas from the last section directly to the problem of $3$-cadence detection. 

Similarly to the $L$-$R$-$3$cadences, there are even $3$-cadences and odd $3$-cadences. Without loss of generality, this paper only considers the even $3$-cadences and defines:

\begin{definition}
    An even $3$-cadence is a $3$-cadence which starts with an even index.
    
    We define the string $S_{\operatorname{even}}$ by $S_{\operatorname{even}} = S\left[\left\{2,4,6,\dots,2\left\lfloor\frac{|S|}{2}\right\rfloor\right\}\right]$ to be the restriction of $S$ to the characters with even indices.
\end{definition}

Let $i,d$ be two integers such that $i-d\leq 0$ and $i+3d>n$ hold. Let $L=\{1,2,\dots, i\}$ and $R=\{i+2d, i+2d+1,\dots, n\}$ be two intervals. Then each $L$-$R$-$3$-cadence is also a $3$-cadence. On the other hand, each $3$-cadence defines integers $i$ and $d$ such that $i-d\leq 0$ and $i+3d>n$ hold. Therefore, we can use Lemma \ref{lem:LR2} to obtain that if $S$ has a $3$-cadence, it also has a $3$-cadence that either starts in one of the first two runs of $S_{\operatorname{even}}$ or ends in one of the last two runs of $S_{\operatorname{even}}$.

The main challenge for the adaption of the detection algorithm for $L$-$R$-$3$-cadences to an detection algorithm for $3$-cadences is that Lemma \ref{lem:LR0} does not quite work.

See, for example, the string $S=000100011$. Since a $3$-cadence can start anywhere in the first third of the string and can end anywhere in the last third of the string, we have $L=\{1,2,3\}$ and $R=\{7,8,9\}$.

In terms of $L$-$R$-$3$-cadences, if we ignore the actual characters of $S$, the $0$ at index $7$ can form an $L$-$R$-$3$-cadence with the index $1$ as well as with the index $3$. Of these two possibilities, only the arithmetic progression $3,5,7$ forms an $L$-$R$-$3$-cadence. However, this arithmetic progression is not structurally maximal and hence not a $3$-cadence.

On the other hand, in the string $S'=000100110$, the $0$ at index $9$ can form an $L$-$R$-$3$-cadence as well as a $3$-cadence with the index $1$ as well as with the index $3$. Therefore, the arithmetic progression $3,6,9$ forms an $L$-$R$-$3$-cadence as well as a $3$-cadence.

We therefore have to restrict the strings in Lemma \ref{lem:LR0} to those indices such that the corresponding $3$-sub-cadences are structurally maximal. We assume without loss of generality that $S[2]=0$ holds.

\begin{lemma} \label{lem:30}
    Let $S$ be a binary string. Define the two intervals $L=\left\{1,2,\dots,\left\lfloor\frac{1}{3}n\right\rfloor\right\}$ and $R=\left\{\left\lfloor\frac{2}{3}n\right\rfloor+1,\left\lfloor\frac{2}{3}n\right\rfloor+2,\dots,n \right\}$. 
    
    Let further $S[L_{\operatorname{even}}]$ be of the form $0^i S'$ and let $l_{\min} = \min(L_{\operatorname{even}}) = 2$ be the first index of $L_{\operatorname{even}}$ and 
    $l_{\max} = 2+2(i-1) = 2i$ be the index corresponding to the last $0$ in the first run of $S[L_{\operatorname{even}}]$.
    
    Then, for any $r_0\in R_{\operatorname{even}}$ with $S[r_0]=0$ define $l'_{\max} = \min \left(l_{\max}, 2\left\lfloor \frac{r_0}{6}\right\rfloor, 2(\left\lceil \frac{3r_0-2n}{2}\right\rceil -1) \right)$. There is an even $3$-cadence which uses this $0$ as last element and any of the first $i$ $0$s of $S[L_{\operatorname{even}}]$ as first element if and only if
    \[S\left[\left\{\frac{l_{\min}+r_{0}}{2}, \frac{l_{\min}+r_{0}}{2}+1, \dots, \frac{l'_{\max}+r_{0}}{2}\right\}\right]\neq 1^{\left(\frac{l'_{\max}+r_{0}}{2}-\frac{l_{\min}+r_{0}}{2}+1\right)}\]
    holds.
    
    Conversely, for any $m_0\in \left\{\frac{l_{\min}+r_{\min}}{2}, \frac{l_{\min}+r_{\min}}{2}+1, \dots, \frac{l_{\max}+r_{\max}}{2}\right\}$ with $S[m_0]=0$ define 
    $l''_{\min} = \max \left(l_{\min}, 2\left(m_0 - \left\lfloor \frac{n}{2}\right\rfloor\right)\right)$ and $l''_{\max} = \min \left(l_{\max}, 2\left\lfloor \frac{m_0}{4}\right\rfloor, 2\left(\left\lceil \frac{3m_0-n}{4}\right\rceil -1\right) \right)$. There is an even $3$-cadence which uses this $0$ as middle element if and only if
    \[S\left[\left\{2m_0-l''_{\max}, 2m_0-l''_{\max}+2, \dots, 2m_0-l''_{\min}\right\}\right] \neq 1^{\left((2m_0-l''_{\min})-(2m_0-l''_{\max})+1\right)}\]
    holds.
\end{lemma}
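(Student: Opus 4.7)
The plan is to parametrize each candidate even 3-cadence by its first index $l$, translate the structural-maximality conditions into explicit bounds on $l$, and then recognize the resulting existence question as an ``is-some-character-a-$0$'' condition on a contiguous (or even-step) substring of $S$. Since the lemma restricts first elements to the initial run of zeros of $S[L_{\operatorname{even}}]$, the constraint $S[l]=0$ is automatic from $l \in \{l_{\min}, l_{\min}+2,\dots, l_{\max}\}$; the rest of the argument is a careful translation between inequalities on $l$ and the floor/ceiling expressions that define $l'_{\max}$, $l''_{\min}$ and $l''_{\max}$.

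For the first statement I would fix $r_0$ and parametrize by $l$ with $d=(r_0-l)/2$. The condition $l-d\le 0$ simplifies to $3l \le r_0$, which for even $l$ becomes $l \le 2\lfloor r_0/6\rfloor$; the condition $l+3d>n$ simplifies to $l < 3r_0-2n$, which for even $l$ becomes $l \le 2(\lceil(3r_0-2n)/2\rceil-1)$. Combined with $l\le l_{\max}$, this is exactly $l'_{\max}$. An even 3-cadence with these parameters then exists iff $S[(l+r_0)/2]=0$ for some admissible $l$; because $l$ sweeps even integers with step $2$, the midpoints $(l+r_0)/2$ are consecutive integers from $(l_{\min}+r_0)/2$ to $(l'_{\max}+r_0)/2$, which yields the claimed substring characterisation.

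For the second statement I would fix $m_0$ and parametrize by $l$ with $d=m_0-l$, so the third index is $r=2m_0-l$. The validity requirement $r\le n$ together with evenness of $l$ forces $l\ge 2(m_0-\lfloor n/2\rfloor)$; the structural condition $l-d\le 0$ gives $2l\le m_0$, hence $l\le 2\lfloor m_0/4\rfloor$; and $l+3d>n$ gives $2l<3m_0-n$, hence $l\le 2(\lceil(3m_0-n)/4\rceil-1)$. Intersecting with $[l_{\min},l_{\max}]$ recovers $l''_{\min}$ and $l''_{\max}$. Again $S[l]=S[m_0]=0$ is automatic, so such a 3-cadence exists iff some admissible $l$ satisfies $S[2m_0-l]=0$; the indices $2m_0-l$ form precisely the arithmetic progression $2m_0-l''_{\max},\, 2m_0-l''_{\max}+2,\dots, 2m_0-l''_{\min}$, giving the claimed subsequence condition.

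The main technical hurdle is the parity bookkeeping: each inequality whose right-hand side can be odd must be converted to exactly the floor/ceiling expression on the even variable $l$, and the three upper bounds (respectively, one lower and two upper bounds) must be jointly tight so that both directions of the ``iff'' follow. Once that routine case analysis is carried out, both statements reduce to the tautology that the cadence exists iff some legal middle (respectively, end) position holds a $0$.
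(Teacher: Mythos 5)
Your proposal is correct and follows essentially the same route as the paper: parametrize by the first index $l$, translate $d>0$, $l-d\le 0$, $i+3d>n$ and $i+2d\le n$ into the floor/ceiling bounds defining $l'_{\max}$, $l''_{\min}$, $l''_{\max}$, and observe that the resulting midpoints (resp.\ endpoints) sweep exactly the index sets in the displayed conditions. The paper's own proof merely asserts these extremality facts for the bounds; you additionally spell out the algebra, which is consistent with what is claimed.
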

\begin{proof}
    Like Lemma \ref{lem:LR0}, this lemma basically holds by definition of the $3$-cadence.
    
    All indices less than or equal to $l'_{\max}$ can form a $3$-cadence with $r_0$, since
    $l_1 := 2\left\lfloor \frac{r_0}{6}\right\rfloor$ is the largest even index fulfilling the inequality $l_{1} - \frac{r_0-l_{1}}{2} \leq 0$ and $l_2 := 2(\left\lceil \frac{3r_0-2n}{2}\right\rceil -1)$ is the largest even index fulfilling the inequality $l_{2} + 3\frac{r_0-l_{2}}{2} > n$.
    
    Similarly, all indices between $l''_{\min}$ and $l''_{\max}$ can form a $3$-cadence with $m_0$, since the index
    $l_3 := 2\left(m_0 - \left\lfloor \frac{n}{2}\right\rfloor\right)$ is the smallest even index fulfilling the inequality $l_{3} + 2(m_0-l_{3}) \leq n$,
    the index $l_4 := 2\left\lfloor \frac{m_0}{4}\right\rfloor$ is the largest even index fulfilling the inequality $l_{4} - (m_0-l_{4}) \leq 0$ and
    $l_5 := 2\left(\left\lceil \frac{3m_0-n}{4}\right\rceil -1\right)$ is the largest even index fulfilling the inequality $l_{5} + 3(m_0-l_{5}) > n$.
\end{proof}

Similarly to the case of the $L$-$R$-$3$-cadence, we can use this lemma to shrink the interval in which the last element of the arithmetic progression can be.

\begin{corollary} \label{cor:3algo}
    Let $S$ be a binary string. Define the two intervals $L=\left\{1,2,\dots,\left\lfloor\frac{1}{3}n\right\rfloor\right\}$ and $R=\left\{r_{\min},r_{\min}+1,\dots,n \right\}$ for a $r_{\min}\geq \left\lfloor\frac{2}{3}n\right\rfloor+1$. 
    
    Let further $S[L_{\operatorname{even}}]$ be of the form $0^i S'$ and $L'_{\operatorname{even}}$ be the set of indices of the first run in $S[L_{\operatorname{even}}]$.
    Let $l_{\min} = \min(L'_{\operatorname{even}}) = 2$ be the first index of $L_{\operatorname{even}}$
    and $l_{\max} = \max(L'_{\operatorname{even}}) = 2+2(i-1) = 2i$ be the index corresponding to the last $0$ in the first run of $S[L_{\operatorname{even}}]$.
    
    If $S[R_{\operatorname{even}}]$ is of the form $1^j$, there is no even $3$-cadence using an element of $L'_{\operatorname{even}}$ as first index.
    
    Otherwise, define $r_0=\min\left(r\in R_{\operatorname{even}}| S[r]=0\right)$ and the corresponding maximal index for the first element of the $3$-cadence $l'_{\max} = \min \left(l_{\max}, 2\left\lfloor \frac{r_0}{6}\right\rfloor, 2(\left\lceil \frac{3r_0-2n}{2}\right\rceil -1) \right)$.
    
    If $S\left[\left\{\frac{l_{\min}+r_{0}}{2}, \frac{l_{\min}+r_{0}}{2}+1, \dots, \frac{l'_{\max}+r_{0}}{2}\right\}\right]$ contains a $0$, the corresponding index forms a $3$-cadence with an index of $L'_{\operatorname{even}}$ and $r_0$.
    
    Otherwise, there is no $3$-cadence using $r_0$ as third index and a $0$ from the first run of $S[L_{\operatorname{even}}]$ as first index and hence, if the substring $S\left[\left\{\frac{l'_{\max}+r_{0}}{2}+1, \frac{l'_{\max}+r_{0}}{2}+2, \dots, \frac{l_{\max}+r_{\max}}{2}\right\}\right]$ of $S$ is of the form $1^j$, then there is no even $3$-cadence using an element of $L'_{\operatorname{even}}$ as first index.
    
    Otherwise, define $m_0=\min\left(m\in\left\{\frac{l'_{\max}+r_{0}}{2}+1, \frac{l'_{\max}+r_{0}}{2}+2, \dots, \frac{l_{\max}+r_{\max}}{2}\right\}| S[m]=0\right)$, the index $l''_{\min} = \max \left(l_{\min}, 2\left(m_0 - \left\lfloor \frac{n}{2}\right\rfloor\right)\right)$ 
    and 
    $l''_{\max} = \min \left(l_{\max}, 2\left\lfloor \frac{m_0}{4}\right\rfloor, 2\left(\left\lceil \frac{3m_0-n}{4}\right\rceil -1\right) \right)$.
    
    If $S\left[\left\{2m_0-l''_{\max}, 2m_0-l''_{\max}+2, \dots, 2m_0-l''_{\min}\right\}\right]$ contains a $0$, the corresponding index forms an $3$-cadence with an index of $L'_{\operatorname{even}}$ and $m_0$.
    
    Otherwise, define $R'=R \cap \mathbb{Z}_{>2m_0-l''_{\min}}$. There is an even $3$-cadence using an element of $L'_{\operatorname{even}}$ as first index if and only if there is an even $3$-cadence using an element of $L'_{\operatorname{even}}$ as first index and an element of $R'_{\operatorname{even}}$ as last index.
\end{corollary}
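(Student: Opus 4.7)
The plan is to mirror the proof of Corollary~\ref{cor:LRalgo}, systematically replacing the $L$-$R$-cadence arguments by ones that respect structural maximality, with Lemma~\ref{lem:30} providing the main technical input in place of Lemma~\ref{lem:LR0}. I would first establish the loop invariant that any even $3$-cadence with first index in $L'_{\operatorname{even}}$ has its third index in $R_{\operatorname{even}}$: initially it holds because $l\leq\lfloor n/3\rfloor$ together with $l+3d>n$ forces $l+2d>\tfrac{2}{3}n\geq r_{\min}-1$, and it is preserved by the recursive shrinking $R\to R'$ once the reduction step itself is shown to be correct.

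The case distinction then follows the structure of Corollary~\ref{cor:LRalgo}. If $S[R_{\operatorname{even}}]$ is all ones, no third element can be $0$, so no $3$-cadence exists. If $r_0$ exists and the window $\{\frac{l_{\min}+r_0}{2},\dots,\frac{l'_{\max}+r_0}{2}\}$ contains a $0$, Lemma~\ref{lem:30} applied with last element $r_0$ directly produces a $3$-cadence. If that window is all ones and the next window $\{\frac{l'_{\max}+r_0}{2}+1,\dots,\frac{l_{\max}+r_{\max}}{2}\}$ is also all ones, then their union covers every possible middle $m=(l+r)/2$ with $l\in L'_{\operatorname{even}}$ and $r\in R_{\operatorname{even}}\cap\{{\geq}r_0\}$; since $S[m]=0$ is necessary for any $3$-cadence, none can exist. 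If instead the second window contains $m_0$, Lemma~\ref{lem:30} applied with middle $m_0$ characterises $3$-cadences through $m_0$ by whether the trimmed third-element window $\{2m_0-l''_{\max},\dots,2m_0-l''_{\min}\}$ (step two) contains a $0$.

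The delicate step is the final reduction: after Subcase~3b (third-element window all ones), I claim a $3$-cadence exists if and only if one exists with third index in $R' = R\cap\mathbb{Z}_{>2m_0-l''_{\min}}$. The ``$\Leftarrow$'' direction is immediate. For ``$\Rightarrow$'' I would split on the third element $t$ of a hypothetical $3$-cadence: (a) if $t\in[2m_0-l''_{\max},\,2m_0-l''_{\min}]$, then $S[t]=1$ by Subcase~3b; (b) if $t>2m_0-l''_{\min}$, then $t\in R'$; (c) the remaining case $t<2m_0-l''_{\max}$ is the obstacle, and I would rule it out by showing that such a $3$-cadence must have middle $m<m_0$, which is impossible because the earlier windows force $S[m]=1$ throughout.

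This case~(c) argument is where I expect the main difficulty to lie. Writing $t=2m-l$, the condition $t<2m_0-l''_{\max}$ is equivalent to $l>2(m-m_0)+l''_{\max}$, and I would combine it with the structural-maximality inequalities $l\leq m/2$ and $l<(3m-n)/2$ together with $l\leq l_{\max}$ to derive $m<m_0$. When $l''_{\max}=l_{\max}$, the bound $l\leq l_{\max}$ immediately contradicts $l>2(m-m_0)+l_{\max}$ for any $m\geq m_0+1$. When $l''_{\max}=2\lfloor m_0/4\rfloor$, pairing the case~(c) inequality with $l\leq m/2$ and using the tightness $l''_{\max}\geq m_0/2-2$ of the floor together with the parity of $l$ yields a contradiction for every residue of $m_0\bmod 4$. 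The case $l''_{\max}=2(\lceil(3m_0-n)/4\rceil-1)$ is handled analogously by pairing with $l<(3m-n)/2$ and using the tightness $l''_{\max}\geq(3m_0-n)/2-2$ of the ceiling, distinguishing on $(3m_0-n)\bmod 4$. The cutoffs in Lemma~\ref{lem:30} are chosen precisely so that this tightness argument closes, which is why the statement of the corollary can use $l''_{\min}$ and $l''_{\max}$ rather than $l_{\min}$ and $l_{\max}$.
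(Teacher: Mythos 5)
Your proposal is correct and takes essentially the same route as the paper, which derives this corollary from Lemma~\ref{lem:30} by mirroring Corollary~\ref{cor:LRalgo} and leaves the details implicit; your loop invariant (every even $3$-cadence starting in $L'_{\operatorname{even}}$ ends in the current $R_{\operatorname{even}}$) and the three-way split on the third index $t$ are the right way to make that precise. Your case~(c) analysis --- ruling out $t<2m_0-l''_{\max}$ by combining $l>2(m-m_0)+l''_{\max}$ with the maximality bounds $2l\leq m$ and $2l<3m-n$ and the parity of $l$ --- supplies the one genuinely non-obvious step the paper omits, and it does close as you sketch (the only surviving possibility is $m=m_0$, which is already excluded because the third-element window for $m_0$ is all ones).
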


An application of this corollary can be seen in Figure \ref{fig:3Example}.

\begin{figure}
    \centering
    \begin{tikzpicture}[scale=1]
    \draw [thick, color=red] (0,4) -- (0,0) -- (1,4);
    
    \draw [thick, color=yellow] (0,4) -- (2,0);
    \draw [thick, color=yellow] (1,4) -- (1,0);
    
    
    \draw (-5,4) node [fill=white,anchor=west] {$S[\{1,2,\dots,16\}_{\operatorname{even}}]$:};
    \draw (0,4) node [fill=white] {0};
    \draw (1,4) node [fill=white] {0};
    \draw (2,4) node [fill=white] {0};
    \draw (3,4) node [fill=white] {0};
    \draw (4,4) node [fill=white] {0};
    \draw (5,4) node [fill=white] {1};
    \draw (6,4) node [fill=white] {1};
    \draw (7,4) node [fill=white] {0};
    
    \draw (-5,2) node [fill=white,anchor=west] {$S[\{17,18,\dots,32\}]$:};
    \draw (-0.5,2) node [fill=white, text=gray] {1};
    \draw (0,2) node [fill=white, text=gray] {1};
    \draw (0.5,2) node [fill=white, text=gray] {1};
    \draw (1,2) node [fill=white, text=gray] {0};
    \draw (1.5,2) node [fill=white] {1};
    \draw (2,2) node [fill=white] {1};
    \draw (2.5,2) node [fill=white] {0};
    \draw (3,2) node [fill=white] {1};
    \draw (3.5,2) node [fill=white] {0};
    \draw (4,2) node [fill=white] {0};
    \draw (4.5,2) node [fill=white] {1};
    \draw (5,2) node [fill=white] {1};
    \draw (5.5,2) node [fill=white] {1};
    \draw (6,2) node [fill=white] {0};
    \draw (6.5,2) node [fill=white] {0};
    \draw (7,2) node [fill=white] {0};
    
    \draw (-5,0) node [fill=white,anchor=west] {$S[\{33,34,\dots,48\}_{\operatorname{even}}]$:};
    \draw (0,0) node [fill=white, text=gray] {0};
    \draw (1,0) node [fill=white, text=gray] {1};
    \draw (2,0) node [fill=white, text=gray] {1};
    \draw (3,0) node [fill=white] {0};
    \draw (4,0) node [fill=white] {1};
    \draw (5,0) node [fill=white] {1};
    \draw (6,0) node [fill=white] {0};
    \draw (7,0) node [fill=white] {1};
    \end{tikzpicture}
    \caption{A string with $48$ characters after one application of Corollary \ref{cor:3algo}. First, the index $r_0=34$ is found. The minimal and maximal candidates for $3$-cadences with $r_0$ are marked with red. Then, the index $m_0=20$ is found. The minimal and maximal candidates for $3$-cadences with $m_0$ are marked with yellow. Afterwards, the gray characters are guaranteed not to form a $3$-cadence with characters from the first run of $S_{\operatorname{even}}$.}
    \label{fig:3Example}
\end{figure}
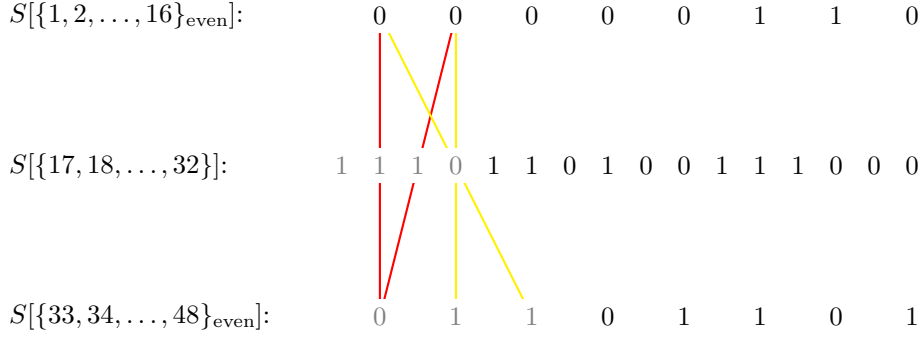

In the uncompressed case, each element of the middle third and the last third has to be read at most once in order to decide whether there is a $3$-cadence which starts in the first run of $S_{\operatorname{even}}$. Furthermore, we can modify this algorithm to detect the existence of a $3$-cadence which start in the second run of $S_{\operatorname{even}}$. By symmetry, we can also decide in linear time, whether there exists a $3$-cadence which end in one of the two last runs of $S_{\operatorname{even}}$. Similarly, we can decide in linear time, whether there is an odd $3$-cadence.

This implies:

\begin{theorem}
    Let $S$ be a binary string. We can decide in linear time whether $S$ contains a $3$-cadence.
    
    If there is a $3$-cadence, this algorithm can also return such a cadence.
\end{theorem}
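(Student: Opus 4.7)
The plan is to apply Corollary \ref{cor:3algo} directly, rather than going through $\Theta(n)$ separate $L$-$R$-$3$-cadence instances as the $\mathcal{O}(n\log n)$ counting algorithm of \cite{FPL_3-Cadences} would. First I split into even and odd $3$-cadences; by symmetry only the even case needs to be treated, and after (if necessary) swapping the characters $0 \leftrightarrow 1$, I may assume $S[2]=0$. Since every $3$-cadence $(i,d,3)$ satisfies $i-d\leq 0 < i$ and $i+2d\leq n < i+3d$, applying Lemma \ref{lem:LR2} to $L=\{1,\dots,\lfloor n/3\rfloor\}$ and $R=\{\lfloor 2n/3\rfloor+1,\dots,n\}$ shows that whenever a $3$-cadence exists at all, there is one whose first index lies in one of the first two runs of $S_{\operatorname{even}}$, or, symmetrically, one whose last index lies in one of the last two runs of $S_{\operatorname{even}}$. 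This reduces the theorem to four symmetric subproblems.

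For the canonical subproblem, a $3$-cadence whose first index lies in the first run of $S_{\operatorname{even}}$, I iterate Corollary \ref{cor:3algo}. Each iteration either outputs a valid triple $(i,d,3)$ as witness, or shrinks the search interval to $R' = R \cap \mathbb{Z}_{>2m_0-l''_{\min}}$ and recurses. The procedure terminates as soon as $S[R_{\operatorname{even}}]$ becomes of the form $1^j$, or an iteration certifies a cadence. Returning a witness is immediate because Lemma \ref{lem:30} explicitly identifies the first index that completes the arithmetic progression with $r_0$ or $m_0$.

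For the linear-time bound I charge the cost of every iteration to characters of $S$ that are scanned for the first time. A single iteration spends $\mathcal{O}(1)$ time once $l_{\min},l_{\max}$ are known, then scans $S_{\operatorname{even}}$ from the current $r_{\min}$ up to $r_0$, then scans the window $[\tfrac{l_{\min}+r_0}{2},\tfrac{l'_{\max}+r_0}{2}]$, then scans from $\tfrac{l'_{\max}+r_0}{2}+1$ up to $m_0$, and finally scans $[2m_0-l''_{\max},2m_0-l''_{\min}]$. Since the next interval $R'$ begins strictly past $2m_0-l''_{\min}$, these scanned regions are pairwise disjoint across iterations, so the total work for the subproblem is $\mathcal{O}(n)$. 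The three remaining subproblems are handled by mirror-symmetric or second-run variants of Corollary \ref{cor:3algo}, and the odd case by the analogous argument on $S[\{1,3,5,\dots\}]$, yielding $\mathcal{O}(n)$ overall.

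The main obstacle I expect is the amortization bookkeeping in the previous paragraph: verifying that the caps $l'_{\max}$, $l''_{\min}$, $l''_{\max}$ from Lemma \ref{lem:30}, which were introduced precisely to enforce structural maximality and which differ from the simpler endpoints in Corollary \ref{cor:LRalgo}, still guarantee that $2m_0-l''_{\min}+1$ lies past every previously scanned index. Once that monotonicity is checked carefully (for example, by using $l''_{\min}\leq l_{\min}+\ldots$ and $m_0 > \tfrac{l'_{\max}+r_0}{2}$), the four subproblems collectively read each character of $S$ only a constant number of times, and the algorithm both decides the existence of a $3$-cadence in linear time and outputs one when it exists.
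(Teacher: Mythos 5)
Your proposal follows the paper's own proof essentially step for step: the even/odd split, the reduction via Lemma \ref{lem:LR2} to cadences starting in one of the first two runs of $S_{\operatorname{even}}$ or ending in one of the last two, and the iteration of Corollary \ref{cor:3algo} with each round's cost charged to freshly scanned characters of the middle and last thirds. The only slip is that Lemma \ref{lem:LR2} should be applied to the intervals $L=\{1,\dots,i\}$ and $R=\{i+2d,\dots,n\}$ determined by an existing $3$-cadence $(i,d,3)$, so that every resulting $L$-$R$-$3$-cadence is automatically structurally maximal, rather than to the fixed first and last thirds, for which an $L$-$R$-$3$-cadence need not be a $3$-cadence.
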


In the compressed case, if the first run of $S_{\operatorname{even}}$ contains only a single index, the corresponding $3$-cadences are exactly the $L$-$R$-$3$-cadences with $L=\{1\}$ and $R=\left\{\left\lfloor\frac{2}{3}n\right\rfloor+1,\left\lfloor\frac{2}{3}n\right\rfloor+2,\dots,n \right\}$. Since the detection of these $L$-$R$-$3$-cadences is $\mathcal{NP}$-complete, if $\mathcal{P}\neq\mathcal{NP}$ holds, it is not possible to decide in polynomial time, whether there is a $3$-cadence which starts in the first run.

Luckily, it is not necessary to decide whether there is a $3$-cadence which starts in the first run in order to decide whether there is a $3$-cadence at all. Let $S[L_{\operatorname{even}}]$ of the form $0^i 1 S'$. Then the $1$ has index $2i+2$. Let $r_1$ be the smallest even index such that $(2i+2)-\frac{r_1-(2i+2)}{2}\leq 0$ and $(2i+2)+3\frac{r_1-(2i+2)}{2}> n$ hold. I.e.\ the smallest index such that the arithmetic progression $2i+2, \frac{2i+2+r_1}{2}, r_1$ could form a $3$-cadence if the three characters in the underlying string were equal.

In this setting, for $L=\{1,2,\dots, 2i+2\}$ and $R=\{r_1,r_1 +1,\dots, n\}$, each $L$-$R$-$3$-cadence is structurally maximal and therefore forms a $3$-cadence. This implies that we can check in polynomial time whether $S[R_{\operatorname{even}}]$ contains the substring 10 and therefore, whether such a $3$-cadence exists. If such a $3$-cadence exists, we are done. 

Otherwise, $S[R_{\operatorname{even}}]$ is of the form $0^j 1^{j'}$ with $j,j'\geq 0$ and we can use Lemma \ref{lem:LR01} to check in polynomial time, whether such a $3$-cadence exists. We can then use the symmetry to detect a $3$-cadence which ends at one of the last $j'+1$ characters of $S[R_{\operatorname{even}}]$.

It is left to show that even in the compressed case, Corollary \ref{cor:3algo} is fast enough to allow finding the $3$-cadences which start in the first run of $S_{\operatorname{even}}$ and end at an index smaller than $r_1$.

In order to do this, I will show that with each application of Corollary \ref{cor:3algo} the discarded part of $R$ doubles until $r_1$ is reached. In the worst case, the new $0$ at $r_0$ is directly at the beginning of $S[R_{\operatorname{even}}]$. Since each $3$-sub-cadences with distance greater than or equal to $\frac{1}{3}n$ is a $3$-cadence, we can assume $r_0<r_1\leq 2i+2+\frac{2}{3}n$ holds and therefore $l_{\max} \geq r_0-\frac{2}{3}n$ holds as well. Also, both $2\left\lfloor \frac{r_0}{6}\right\rfloor$ and $2(\left\lceil \frac{3r_0-2n}{2}\right\rceil -1)$ are greater than or equal to $r_0-\frac{2}{3}n$. Therefore $l'_{\max} \geq r_0-\frac{2}{3}n$ holds. 

Similarly, in the worst case, the new $0$ at $m_0$ is directly behind $\frac{l'_{\max}+r_0}{2}\geq r_0 - \frac{1}{3}n$. Therefore, in the worst case, the index $m_0$ is at $2r_0 - \frac{1}{3}n + 1$. With $l''_{\min} = \max \left(l_{\min}, 2\left(m_0 - \left\lfloor \frac{n}{2}\right\rfloor\right)\right)$, this implies that the inequality $2m_0-l''_{\min}\geq \min\left(r_0 + (r_0-\frac{2}{3}n), n-1\right)$ holds.

Hence, under the assumption that $r_0<r_1$ holds, one application of Corollary \ref{cor:3algo} checks for an interval of size $r_0-\frac{2}{3}n$, whether this interval contains any last indices for a $3$-cadence which starts in the first run. Therefore, we only need at most $\log n$ applications of this corollary.

This implies that it can be decided in polynomial time whether a grammar-compressed binary string contains any $3$-cadences.

\section{Conclusion}

This paper shows that we can decide in linear time whether an uncompressed binary string contains a $3$-cadence. While we should expect that it is more difficult to avoid $3$-cadences in binary strings than to include $3$-cadences, it is surprising that it is strictly easier to decide whether there is any $3$-cadence at all than to decide whether there is a $3$-cadence with a given character.

For the latter problem, Amir et al.\ have shown in \cite{Cadences_Amir} that we should not expect a solution with time complexity $o(n\log n)$ by reduction of the $3$SUM problem with bounded elements.

For the compressed case, we have shown that we can decide in polynomial time whether a grammar-compressed binary string contains a $3$-cadence. However, all even slightly harder problems have been shown to be $\mathcal{NP}$-complete.
These hardness-results seem to indicate that cadences may not be very useful in compressed pattern matching.

While we can decide in constant time whether a string contains a $k$-sub-cadence, there are no known nontrivial bounds on the bit complexity of the detection of $k$-sub-cadences with a given character. Closely related, it is unknown whether equidistant subsequence matching is $\mathcal{NP}$-hard on compressed binary strings.

Finally, in terms of uncompressed cadence detection, it is still unknown whether we can decide with sub-quadratic bit complexity whether a given string contains a $4$-cadence. The currently best result is by Funakoshi et al., who presented a detection algorithm with sub-quadratic time complexity in the word RAM model in \cite{Funakoshi_ESM}.

\bibliography{cadences}

\appendix

\section{L-R-Cadences with Overlap}

In this section, I will extend the result of Section \ref{sec:LR} and show that all results still hold if the intervals $L$ and $R$ are allowed to have overlap.

Since all $\mathcal{NP}$-complete of Section \ref{sec:LR} are still $\mathcal{NP}$-complete for this more general notion of $L$-$R$-cadences, it is left to show that one can detect an $L$-$R$-3-cadence in linear time in uncompressed binary strings with respect to the length of the string and in polynomial time in grammar-compressed binary strings with respect to the compressed size of the string and the additional variable $\max\left(\frac{|L|}{|R|},\frac{|R|}{|L|}\right)$.

Let $L$ and $R$ therefore be two overlapping intervals. Define the overlapping part $M = L\cap R$ as well as the two non-overlapping parts $L' = L\setminus M$ and $R' = R\setminus M$. Since each $L$-$R$-$3$-cadence starts with an index in $L$ and ends with an index in $R$, we can assume that each index in $L'$ is less than each index in $M$ and that each index in $M$ is less than each index in $R'$.

By construction, each $L$-$R$-$3$-cadence is either 
\begin{itemize}
    \item an $L'$-$M$-$3$-cadence,
    \item an $L'$-$R'$-$3$-cadence, 
    \item an $M$-$M$-$3$-cadence or 
    \item an $M$-$R'$-$3$-cadence.
\end{itemize}

The $M$-$M$-$3$-cadences are exactly the $3$-sub-cadences on the string $S[M]$. Therefore, van der Waerden's theorem shows that if $S[M]$ contains at least $9$ characters, it is guaranteed that a $M$-$M$-$3$-cadences exist and that we can find such a sub-cadence by reading the first $9$ characters of $S[M]$.

We can therefore assume in the remainder of the proof that $S[M]$ contains less than $9$ characters. In this case, we can find all $M$-$M$-$3$-cadences in constant time in uncompressed strings and in linear time in grammar-compressed strings.

Since $|M|$ is small, there is no detection algorithm for $L'$-$M$-$3$-cadences and for $M$-$R'$-$3$-cadences which runs in polynomial time in grammar-compressed strings. However, each $L'$-$R$-$3$-cadence is either an $L'$-$M$-$3$-cadence or an $L'$-$R'$-$3$-cadence.

If $L'$ is empty, then there are no $L'$-$R$-$3$-cadences. Otherwise, we can use the results of Section \ref{sec:LR} to detect an $L'$-$R$-$3$-cadence. This takes linear time in uncompressed strings. Furthermore, in grammar-compressed strings, we can detect an $L'$-$R$-$3$-cadence in polynomial time with respect to the compressed size and $\max\left(\frac{|L'|}{|R|},\frac{|R|}{|L'|}\right)$. However, since $|L|-|M|=|L'|\neq 0$ and $|M|<9$ hold, the fraction $\frac{|R|}{|L'|}$ is bounded from above by $9\frac{|R|}{|L|}$. Therefore, in grammar-compressed strings, we can detect an $L'$-$R$-$3$-cadence in polynomial time with respect to the compressed size and $\max\left(\frac{|L|}{|R|},\frac{|R|}{|L|}\right)$.

Since, each $L$-$R'$-$3$-cadence is either an $L'$-$R'$-$3$-cadence or an $M$-$R'$-$3$-cadence, we can similarly detect those sub-cadences. Furthermore, since we do not attempt to count the number of $L$-$R$-$3$-cadences, it is not a problem that we may find $L'$-$R'$-$3$-cadences twice.

This implies:

\begin{theorem}
    For two, not necessarily disjoint, intervals $L$ and $R$, it is possible to detect whether a binary string contains any $L$-$R$-cadence
    \begin{itemize}
        \item in $\mathcal{O}\left(|L|+|R|\right)$ in uncompressed strings and
        \item in polynomial time with respect to the compressed size and $\max\left(\frac{|L|}{|R|},\frac{|R|}{|L|}\right)$ in grammar-compressed strings.
    \end{itemize}
    
    If such an $L$-$R$-cadence exist, we can find such a cadence in the same time.
\end{theorem}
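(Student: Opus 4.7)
The plan is to reduce the overlapping case to the disjoint case of Theorem \ref{thm:LRCad-P} plus a small bounded-size check inside the overlap. Let $M = L \cap R$, $L' = L \setminus M$ and $R' = R \setminus M$; because $L$ and $R$ are intervals sharing $M$, we may order things so that every index of $L'$ precedes every index of $M$, which in turn precedes every index of $R'$. Any $L$-$R$-$3$-cadence $(i, i+d, i+2d)$ then falls into exactly one of four classes according to whether $i \in L'$ or $i \in M$, and whether $i+2d \in M$ or $i+2d \in R'$: namely $L'$-$M$, $L'$-$R'$, $M$-$M$ and $M$-$R'$. I would attack each class in turn.

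For the $M$-$M$ class, I would invoke van der Waerden's theorem with $k = 3$ and $|\Sigma| = 2$: if $|M|$ exceeds the constant $m(3, 2) = 9$, then $S[M]$ is guaranteed to contain a $3$-sub-cadence, which can be recovered by inspecting the first $9$ characters of $S[M]$ (constant time uncompressed, and at most a constant number of character-access queries in the grammar-compressed case). Otherwise $|M|$ is a constant and I brute-force all $\mathcal{O}(1)$ candidate triples inside $M$. For the remaining three classes I would note that $L'$-$M$ and $L'$-$R'$ together exhaust the $L'$-$R$-$3$-cadences, while $M$-$R'$ and $L'$-$R'$ together exhaust the $L$-$R'$-$3$-cadences; since $L'$ is disjoint from $R$ and $L$ is disjoint from $R'$, Theorem \ref{thm:LRCad-P} applies directly to each of the two pairs $(L', R)$ and $(L, R')$. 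Double-counting the $L'$-$R'$ class across the two calls is harmless because we only need detection, not enumeration.

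The main obstacle is preserving the time bounds, especially the compressed one, under this reparametrization. In the uncompressed setting, $\mathcal{O}(|L'|+|R|) + \mathcal{O}(|L|+|R'|) = \mathcal{O}(|L|+|R|)$ and we are done. In the compressed setting Theorem \ref{thm:LRCad-P} charges a polynomial in the compressed size plus a polynomial in $\max(|L'|/|R|, |R|/|L'|)$ for the first call and the analogous ratio for the second, so these ratios must stay within a constant factor of $\max(|L|/|R|, |R|/|L|)$. This is precisely where the $M$-$M$ branch pays off: once it is handled we may assume $|M| < 9$, so whenever $|L'| \geq 1$ (the only case where the $(L', R)$ call is non-vacuous) we have $|L| \leq |L'| + 8 \leq 9|L'|$, and therefore $|R|/|L'| \leq 9|R|/|L|$; the symmetric inequality bounds $|L|/|R'|$ in terms of $|L|/|R|$. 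This constant-factor blow-up is absorbed by the polynomial dependence in Theorem \ref{thm:LRCad-P}, giving the claimed bounds.
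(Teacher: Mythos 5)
Your proposal matches the paper's own proof essentially step for step: the same decomposition into $M = L\cap R$, $L' = L\setminus M$, $R' = R\setminus M$, the same van der Waerden argument with the constant $m(3,2)=9$ for the $M$-$M$ class, the same regrouping of the remaining classes into the disjoint pairs $(L',R)$ and $(L,R')$, and the same bound $|R|/|L'|\leq 9|R|/|L|$ to control the ratio in the compressed case. The argument is correct and no further comparison is needed.
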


\end{document}